\newtheorem{theorem}{Theorem}[section]
\newtheorem{lemma}[theorem]{Lemma}
\newcounter{note}[section]
\renewcommand{\thenote}{\thesection.\arabic{note}}
\newcommand{\gknote}[1]{\stepcounter{note}\textcolor{blue}{$\ll${\bf Greg~\thenote:} {\sf #1}$\gg$\marginpar{\tiny\bf gk~\thenote}}}
\newcommand{\cI}{\ensuremath{\mathcal{I}}}
\newcommand{\cT}{\ensuremath{\mathcal{T}}}
\newcommand{\Be}{\ensuremath{B^{\epsilon}}}
\newcommand{\Bp}{\ensuremath{B^{\epsilon '}}}
\newcommand{\Bd}{\ensuremath{B^{\epsilon /2}}}
\newcommand{\cL}{\ensuremath{\mathcal{L}}}
\newcommand{\cS}{\ensuremath{\mathcal{S}}}
\newcommand{\cM}{\ensuremath{\mathcal{M}}}
\newcommand{\ios}{I/Os}
\newcommand{\io}{I/O}
\newcommand*{\rom}[1]{\expandafter\@slowromancap\romannumeral #1@}
\renewcommand{\paragraph}[1]{ \vspace{0.15cm} \noindent \textbf{#1}}
\title{External Memory Planar Point Location with Fast Updates\thanks{This work was supported by the Fonds de la Recherche Scientifique-FNRS under Grant no MISU F 6001 1 and by NSF Grant CCF-1533564.} }
\author{
John Iacono\thanks{Universit\'{e} Libre de Bruxelles.
\texttt{\{johniacono,bkarsin,gregkoumoutsos\}@gmail.com}} \thanks{New York University, USA.}
\qquad
Ben Karsin\footnotemark[2]
\qquad
Grigorios Koumoutsos\footnotemark[2]\\[1.2ex]
Université libre de Bruxelles, Belgium
}
\date{}
\begin{document}

\maketitle

\begin{abstract}
We study dynamic planar point location in the \textit{External Memory Model} or Disk Access Model (DAM). Previous work in this model achieves polylog query and polylog amortized update time. We present a data structure with $O( \log_B^2 N)$ query time and $O(\frac{1}{ B^{1-\epsilon}} \log_B N)$ amortized update time, where $N$ is the number of segments, $B$ the block size and $\epsilon$ is a small positive constant, under the assumption that all faces have constant size. This is a $B^{1-\epsilon}$ factor faster for updates than the fastest previous structure, and brings the cost of insertion and deletion down to subconstant amortized time for reasonable choices of $N$ and $B$. Our structure solves the problem of vertical ray-shooting queries among a dynamic set of interior-disjoint line segments; this is well-known to solve dynamic planar point location for a connected subdivision of the plane with faces of constant size.

\end{abstract}

\setcounter{page}{0}
\newpage

\includecomment{onlymain}
\excludecomment{onlyapp}

\begin{onlymain}
\section{Introduction}

The \textit{dynamic planar point location} problem is one of the most fundamental and extensively studied problems in geometric data structures, and is defined as follows: We are given a connected planar polygonal subdivision $\Pi$ with $N$ edges. For any given query point $p$, the goal is to find the face of $\Pi$ that contains $p$, subject to insertions and deletions of edges. Here we focus on subdivisions $\Pi$ such that each face has constant number of edges. An equivalent formulation, which we use here is as follows: given a set $S$ of $N$ interior-disjoint line segments in the plane, for any given query point $p$, report the first line segment in $S$ that a vertical upwards-facing ray from $p$ intersects, subject to insertions and deletions of segments.

Dynamic planar point location has many applications in spatial databases, geographic information systems (GIS), computer graphics, etc. Moreover it is a natural generalization of the dynamic dictionary problem with predecessor queries; this problem can be seen as the one dimensional variant of planar point location.

In this paper we focus on the External Memory model, also known as the  Disk Access Model (DAM)~\cite{AggarwalV88}.  
The DAM is the standard method of designing algorithms that efficiently execute on large datasets stored in secondary storage.
This model assumes a two-level memory hierarchy, called disk and internal memory and it is parameterized by values $M$ and $B$; the disk is partitioned into blocks of size $B$, of which $M/B$ can be stored in memory at any given moment. The cost of an algorithm in the DAM is the number of block transfers between memory and disk, called Input-Output operations (I/Os). The quintessential DAM-model data structure is the B-Tree \cite{DBLP:journals/acta/BayerM72}. 
See~\cite{Vitter01,Vitter06} for surveys. 
Many applications of dynamic planar point location, such as GIS problems, must efficiently process datasets that are too massive to fit in internal memory, thus it is of great relevance and interest to consider the problem in the DAM and to devise \io\ efficient algorithms.

\subsection{Previous Work}


\paragraph{RAM Model.}
 In the RAM model (the leading model for applications where all data fit in the internal memory) the dynamic planar point location problem has been extensively studied~\cite{ArgeBG06,BaumgartenJM94,ChiangT92,ChengJ92,ChanN15,GoodrichT98}. It is a  major and long-standing open problem in computational geometry to design a data structure that supports queries and updates in $O(\log N)$ time~\cite{Chazelle91,Chazelle94,Snoeyink04}, i.e., to achieve the same bounds as for the dynamic dictionary problem. In a recent breakthrough, Chan and Nekrich in FOCS'15~\cite{ChanN15} presented a data structure supporting queries in $O(\log N (\log \log N)^2)$ time and updates in $O(\log N (\log \log N))$ time. They also showed the tradeoff of supporting queries in $O(\log N)$ time and updates in $O((\log N)^{1+\epsilon})$ time or vice-versa for $\epsilon>0$.

Recently Oh and Ahn~\cite{Gamwtikorea} presented the first data structure for a more general setting where the polygonal subdivision $\Pi$ is not necessarily connected; their data structure supports queries in $O(\log N (\log \log N)^2)$ time and updates in $O(\sqrt{N} \log N (\log \log N)^{3/2} )$ amortized time.

\paragraph{External Memory model} (See Table~\ref{tab:results}). Several data structures have been presented over the years which support queries and updates in polylog($N$) \ios \cite{AgarwalABV99,ArgeV04,ArgeBR12}. Table~\ref{tab:results} contains a list of results of prior work. The best update bound known is by Arge, Brodal and Rao~\cite{ArgeBR12} and achieves $O(\log_B N)$ amortized \ios.  The query time of their data structure is $O(\log^2_B N)$. Very recently, the first data structure that supports queries in $o(\log^2_B N)$ \ios\ was announced by Munro and Nekrich~\cite{MN19}. In particular they support queries in $O(\log_B N (\log \log_B N)^3)$ \ios. However their update time is slightly worse than logarithmic, $O(\log_B N (\log \log_B N)^2)$. In all those works the bounds are obtained by solving the problem of vertical ray-shooting. 


\setlength{\tabcolsep}{0.4em} 
{\renewcommand{\arraystretch}{1.5}

\begin{table}[t]
\begin{center}
\resizebox{\linewidth}{!}{
\begin{tabular}{| c | c | l | l | l | l |}
\hline
Reference & Space & Query Time & Insertion Time & Deletion Time & \\
\hline
Agarwal et al.~\cite{AgarwalABV99} & $O(N)$ & $O(\log^2_{B}{N})$ & $O(\log^2_{B}{N})$ & $O(\log^2_{B}{N})$ & M \\
Arge and Vahrenhold~\cite{ArgeV04} & $O(N)$ &  $O(\log^2_{B}{N})$ & $O(\log^2_{B}{N})$ & $O(\log_{B}{N})$ & G \\
Arge et al.~\cite{ArgeBR12} & $O(N)$ &  $O(\log^2_{B}{N})$ & $O(\log_{B}{N})$ & $O(\log_{B}{N})$ & G \\
Munro and Nekrich~\cite{MN19} & $O(N)$ &  $O(\log_{B}{N}\log^3 \log_B{N})$ & $O(\log_{B}{N}\log^2 \log_B{N})$ & $O(\log_{B}{N}\log^2 \log_B{N})$ & G \\
\hline
This paper & $O(N)$ & $O(\log^2_{B}{N})$ & $O((\log_{B}{N})/B^{1-\epsilon})$ & $O((\log_{B}{N})/B^{1-\epsilon})$ & G \\
\hline
\end{tabular}
}
\end{center}
\caption{Overview of results on dynamic planar point location in external memory. Results marked with M are for monotone subdivisions and G for general ray-shooting among non-intersecting segments. Query bounds are worst-case and update bounds are amortized. Space usage is measured in words. Here $\epsilon $ is a constant such that $ 0 < \epsilon \leq 1/2$. 
}
\label{tab:results}
\end{table}
}

\paragraph{Fast Updates in External Memory.} One of the most intriguing and practically relevant features of the external memory model is that it allows fast updates. For the dynamic dictionary problem with predecessor queries, the optimal update bound in the RAM model is $O(\log N)$. In external memory, however, $B$-trees achieve the optimal query time of $O(\log_B N)$ and typical update time of $O(\log_BN)$, although substantially faster update times are possible. Brodal and Fagerberg~\cite{BrodalF03} showed that $O(\frac{1}{B^{1-\epsilon}}\log_B N)$ amortized \ios\ per update can be supported, for small positive constant, $\epsilon$, while retaining $O(\log_B N)$-time queries; they further showed that this is an asymptotically optimal tradeoff between updates and queries. Observe that this update bound is a huge speedup from $O(\log_B N)$ and that for reasonable choices of parameters, e.g.~$B\geq 1000$, $N<10^{93}$, $\epsilon=\frac{1}{2}$, this yields a subconstant amortized number of \ios\ per update. A similar update bound was later achieved for other dynamic problems like three-sided range reporting and top-$k$ queries~\cite{Brodal16}.

Given this progress and the fact that in the RAM model the bounds achieved for planar point location and the dictionary problem are believed to coincide, it is natural to conjecture that a similar update bound can be achieved for the dynamic planar point location problem. However, to date no result has been presented that achieves sublogarithmic insertion or deletion time.

\subsection{Our Results}

We consider the dynamic planar point location problem in the external memory model and present the first data structure with sublogarithmic amortized update time of $O(\frac{1}{B^{1-\epsilon}}\log_B N)$ \ios. Prior to our work, the best update bound for both insertions and deletions was $O(\log_B N)$, achieved by Arge et al.~\cite{ArgeBR12}.
Our main result is:

\begin{theorem}[Main result]
\label{thm:main}
For any constant $0 < \epsilon \leq 1/2$, there exists a data structure which uses $O(N)$ space, answers planar point location queries for polygonal subdivisions $\Pi$ with faces of constant size in $O((1/\epsilon)^2 \cdot \log^2_B N) = O(\log^2_B N) $ \ios\ and supports insertions and deletions in $O(\log_B N/(\epsilon \cdot B^{1-\epsilon})) = O((\log_B N)/B^{1-\epsilon})$ amortized \ios. The data structure can be constructed in $O((N/B) \log_B N)$ \ios.  
\end{theorem}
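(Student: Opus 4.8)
The plan is to build a buffered, high-fanout version of the external-memory interval tree for vertical ray shooting, combining the structure of Arge et al.~\cite{ArgeV04,ArgeBR12} with the buffering idea behind the buffer tree of Brodal and Fagerberg~\cite{BrodalF03}. Concretely, I would maintain a weight-balanced $B$-tree over the $x$-coordinates of the segment endpoints with branching parameter $\Theta(B^\epsilon)$, so that its height is $\Theta((1/\epsilon)\log_B N)$ and every node induces $\Theta(B^\epsilon)$ vertical slabs. Each segment is assigned, \emph{\`a la} interval tree, to the highest node at which it crosses a slab boundary, and there it is clipped into a ``long'' part that spans a contiguous run of that node's slabs together with at most two ``short'' end parts; the short parts are pushed one level down and handled recursively. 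At each node I keep a linear-space secondary structure over the long parts stored there --- essentially the multislab ray-shooting structure of~\cite{ArgeV04,ArgeBR12}, re-tuned to fanout $B^\epsilon$ --- that answers ``lowest stored segment above a given point'' in $O((1/\epsilon)\log_B N)$ I/Os. A query for point $p$ then walks the $\Theta((1/\epsilon)\log_B N)$ nodes on the root-to-leaf path determined by $p$'s $x$-coordinate, queries each secondary structure, and returns the minimum, for a total of $O((1/\epsilon)^2\log^2_B N)$ I/Os; correctness follows exactly as in the static interval tree, since every segment relevant to $p$ is stored on that path.

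To get the update bound I would attach a buffer of $\Theta(B)$ pending insertions/deletions to each node of the base tree. An update enters the root buffer; whenever a buffer reaches size $\Theta(B)$ it is flushed: each pending update is either \emph{absorbed} into the current node (if its segment crosses one of this node's slab boundaries) or \emph{forwarded} into the appropriate child's buffer, with opposite-type updates on the same segment annihilating on contact. Forwarding a batch of $\Theta(B)$ updates among $\Theta(B^\epsilon)$ children costs $O(B^\epsilon)$ I/Os, i.e. $O(1/B^{1-\epsilon})$ amortized per update per level, and over the $\Theta((1/\epsilon)\log_B N)$ levels this already sums to $O((\log_B N)/(\epsilon B^{1-\epsilon}))$. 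The absorbed updates must still reach the secondary structure, so that structure must itself support \emph{batched} updates at amortized cost $O(1/B^{1-\epsilon})$ per element; I would obtain this by giving the multislab lists and the $y$-sorted structures inside them the same buffer-tree treatment, pushing batches of absorbed updates down an $O((1/\epsilon)\log_B N)$-depth internal hierarchy, again at $O(1/B^{1-\epsilon})$ per element per level. Deletions are handled uniformly as ``deletion'' updates trickling down the buffers, augmented by a standard global-rebuilding rule --- rebuild the whole structure in $O((N/B)\log_B N)$ I/Os whenever a constant fraction of the $N$ segments have been deleted --- which adds only $O((\log_B N)/B)$ amortized per operation. Queries stay correct because a query also reads the $O(1)$ blocks of each buffer on its path (base-tree and secondary), costing only $O((1/\epsilon)\log_B N)$ extra I/Os, and every pending update relevant to $p$ lies in a buffer on that same path.

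The remaining ingredient is rebalancing. Using a weight-balanced $B$-tree guarantees that a node of weight $w$ is split or merged only once per $\Theta(w)$ subtree updates; when this happens I would first flush every buffer in that node's subtree (charging the flush to the $\Theta(w)$ updates that caused the imbalance), then split/merge and rebuild the affected secondary structures in $O((w/B)\log_B w)$ I/Os, which amortizes to $O((\log_B N)/B)$ per update, dominated by the buffering term; the same bottom-up, sorting-based procedure gives the claimed $O((N/B)\log_B N)$ construction. The main obstacle I anticipate is the ``doubly buffered'' design: one must verify that buffering \emph{inside} the secondary multislab structures still yields a linear-space structure with the stated $O((1/\epsilon)^2\log^2_B N)$ query time (pending updates must not be allowed to accumulate enough to slow a query) and, more delicately, that the two nested levels of buffering interact cleanly with weight-balanced splits and merges --- in particular that a split never has to flush or rebuild more than the updates charged to it can pay for, and that the $1/\epsilon$ (rather than $1/\epsilon^2$) dependence in the update bound survives the nesting. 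Assembling a single amortized potential that simultaneously accounts for base-tree buffers, secondary-structure buffers, rebalancing, and lazy deletions is where the real work lies.
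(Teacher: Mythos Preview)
Your high-level plan matches the paper's: a buffered $B^{\epsilon}$-ary interval tree with per-node buffers, internally buffered secondary structures, weight-balanced rebalancing, and global rebuilding for deletions. Two concrete points, however, separate your sketch from a working proof.

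First, ``push the short end parts one level down and handle recursively'' is the \emph{segment-tree} decomposition, not the interval-tree one: a single segment then contributes one long piece per level along a root-to-leaf path, and total space becomes $\Theta(N\log_B N)$. The paper keeps the two end pieces at the \emph{same} node $v$, in $B^{\epsilon/2}$ dedicated \emph{left} and \emph{right} structures (Theorem~\ref{lem:left_right}), one per child slab, each holding segments whose right (resp.\ left) endpoints lie on a common vertical line. These left/right structures are themselves nontrivial: they are buffered external priority-search trees keyed on the segments' $y$-coordinates, with each node caching the farthest-extending segments in its subtree, so that a ray-shooting query can be answered by following \emph{two} root-to-leaf paths (one for $\rho^{+}$ and one for $\rho^{-}$) rather than scanning a range.

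Second, and more seriously, buffering the multislab structure is not by itself enough. The real obstacle to fast multislab insertion is not I/O batching but \emph{comparability}: placing a newly inserted segment into the maintained total vertical order is a predecessor search costing $\Omega(\log_B K)$ I/Os, and buffering cannot amortize that away because each inserted segment needs its own search. The paper's fix (Section~\ref{sec:multislab}) is to shatter every inserted middle segment into at most $B^{\epsilon/2}$ \emph{unit} segments, one per slab crossed; a unit segment is comparable to every stored segment that crosses its slab, so it can be routed down the buffered tree without any search. The $B^{\epsilon/2}$ blow-up in the number of inserted pieces is exactly why the base interval tree uses fanout $B^{\epsilon/2}$ (so $\epsilon'=\epsilon/2$) rather than $B^{\epsilon}$, and why periodic rebuilding is needed to keep the multislab linear-size. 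Your proposal does not yet contain an idea that plays this role.
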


To obtain this result, several techniques are used.  
Our primary data structure is
an augmented \textit{interval tree}~\cite{DBLP:journals/ipl/EdelsbrunnerM81}.
We combine both the primary interval tree and two  auxiliary structures described below with the \textit{buffering} technique~\cite{BrodalF03,Arge03} to improve insertion and deletion bounds. In Section~\ref{sec:overview} we prove Theorem~\ref{thm:main} using our auxiliary structures as black boxes and omit some technical details relating to rebuilding; these details are deferred to Section~\ref{sec:restucture}.

Similarly to previous work, we focus on solving the problem of vertical ray-shooting queries. Our first auxiliary structure answers vertical ray-shooting queries among non-intersecting segments whose right (left) endpoints lie on the same vertical line. This is called the \textit{left (right) structure} (in Section~\ref{sec:overview} it will be clear why we choose this terminology and not vice-versa).
Left/Right structures of Agarwal et al.~\cite{AgarwalABV99}, which support queries and updates in $O(\log_B{K})$ \ios, are used by several prior works~\cite{AgarwalABV99,ArgeV04,ArgeBR12}. Our structure improves on their result by reducing the update bound by a factor of $B^{1-\epsilon}$. We obtain the following result, the proof of which is the topic of Section~\ref{sec:left_right}:

\begin{theorem}[Left/right structure]
\label{lem:left_right}
For a set of $K$ non-intersecting segments whose right (left) endpoints lie in the same vertical line and any constant $0 < \epsilon \leq 1/2$, we can create a data structure which supports vertical ray-shooting queries in $ O((1/\epsilon) \cdot \log_B K) = O(\log_B K)$ \ios\ and insertions and deletions in $O((\log_B K)/( \epsilon \cdot B^{1-\epsilon}))=O((\log_B K)/B^{1-\epsilon})$ amortized \ios. This data structure uses $O(K)$ space and it can be constructed in $O((K/B) \log_B K)$ \ios. If the segments are already sorted, it can be constructed in $O(K/B)$ \ios.
\end{theorem}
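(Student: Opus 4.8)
The plan is to realize the left/right structure as a buffered multiway search tree over the segments sorted by their natural vertical order, combining the navigation scheme of Agarwal et al.~\cite{AgarwalABV99} with the buffering technique of Brodal and Fagerberg~\cite{BrodalF03}.

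I would start from the basic geometric fact. Because the $K$ segments are interior-disjoint and all have their right endpoint on a common vertical line $\ell$, the relation ``$s$ lies below $s'$ wherever both are defined'' is a total order $\prec$ on the segments, and a single comparison costs $O(1)$ \ios\ (compare the $y$-coordinates of the two right endpoints, breaking ties by slope). For a query point $p=(p_x,p_y)$ whose $x$-coordinate is at most that of $\ell$ (otherwise the answer is empty), a segment $s$ is hit by the upward ray from $p$ exactly when $s$ is \emph{stabbed} --- its left endpoint has $x$-coordinate $\le p_x$ --- and $s$ passes on or above $p$ at $x=p_x$; moreover, among the stabbed segments taken in $\prec$-order the hit ones form a suffix. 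Hence the answer is the $\prec$-minimum segment that is both stabbed and above $p$, which is precisely the query supported by the structure of~\cite{AgarwalABV99}: a search tree on the $\prec$-order in which each internal node carries $O(B)$ words of navigation data that, given $p$, locate in $O(1)$ \ios\ the child whose subtree contains that $\prec$-minimum hit segment.

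I would then rebuild this tree with branching factor $\Theta(B^{\epsilon})$ instead of $\Theta(B)$: every node now stores $O(B^{\epsilon})=O(1)$ blocks of navigation data, the depth drops to $O(\log_{B^{\epsilon}}K)=O(\tfrac1\epsilon\log_B K)$, and a weight-balanced variant keeps rebalancing to $O(1/B)$ amortized \ios\ per update and per level. To each node I attach an update buffer of size $\Theta(B)$. An insertion or deletion is dropped into the root buffer in $O(1)$ amortized \ios; when a buffer reaches $\Theta(B)$ pending operations it is flushed down to the $\Theta(B^{\epsilon})$ children, possibly triggering cascading flushes and leaf splits or merges, with the $O(1)$-block navigation data of the touched nodes recomputed along the way. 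A flush costs $O(B^{\epsilon})$ \ios\ and processes $\Theta(B)$ operations, i.e.\ $O(B^{\epsilon-1})$ amortized per operation and per level; summed over the $O(\tfrac1\epsilon\log_B K)$ levels this is $O\!\left(\tfrac{1}{\epsilon B^{1-\epsilon}}\log_B K\right)$ amortized \ios\ per update, the claimed bound. The space is $O(K)$ (there are $O(K/B)$ leaves and $O(K/B^{1+\epsilon})$ internal nodes, each of $O(B)$ words), and a $\prec$-sorted input is turned into the tree bottom up in $O(K/B)$ \ios, while an unsorted input is sorted first in $O((K/B)\log_B K)$ \ios.

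The step I expect to be the crux is the query, which must stay worst-case $O(\tfrac1\epsilon\log_B K)$ even though updates now sit undelivered in buffers throughout the tree. The plan is to maintain the invariant that a node's navigation data summarizes \emph{all} segments in its subtree, including those currently in buffers inside the subtree --- so that an insertion only extends the root buffer and a flush recomputes only the flushing node's navigation data, with no propagation up or down --- and then answer a query purely top-down: at each node read the node and its $O(1)$-block buffer, fold the buffered operations (those found here, together with at most one relevant segment carried down from above) into the navigation decision, descend into the single child that can contain the current candidate answer, update the one carried candidate, and recurse. The reason the carried information stays $O(1)$ is that the sought object is a single segment, so along the search path it suffices to remember the $\prec$-minimum hit segment seen so far in the still-relevant $\prec$-range; at the reached leaf this candidate is compared against the leaf's segments. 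The delicate point --- and the one I would spend the most care on --- is making buffered \emph{deletions} respect the $O(1)$-per-level bookkeeping: a buffered deletion of what would otherwise be the answer must be honoured without letting the query wander off its root-to-leaf path, which I expect to handle by also reflecting pending deletions in the navigation data and carrying down only a constant amount of additional information (essentially the $\prec$-threshold below which segments in the relevant range are pending-deleted) as the search descends.
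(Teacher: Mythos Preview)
Your high-level plan --- take the $\prec$-ordered search tree of~\cite{AgarwalABV99}, lower the fanout to $B^{\epsilon}$, and attach $\Theta(B)$-size update buffers --- is sound for the update bound, and your amortized flush calculation is correct. But the query side has two genuine gaps, and the paper's construction differs from yours in exactly the places needed to close them.

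\textbf{Single-path descent does not work.} With one representative per child (the farthest-extending segment), you cannot in general isolate a \emph{single} child containing the $\prec$-minimum hit segment. Take children $c_1\prec\cdots\prec c_d$ with representatives $m_1,\ldots,m_d$; let $j$ be the least index with $m_j$ stabbed and above $p$, and let $i^-<j$ be the greatest index with $m_{i^-}$ stabbed and below $p$. Then the answer lies either in $c_j$ or in $c_{i^-}$: $c_{i^-}$ can contain a stabbed segment $s\succ m_{i^-}$ that is above $p$, and nothing in the navigation data distinguishes the two cases. The paper makes this explicit (their Figure~2) and runs the search with \emph{two} pointers $v_+$ and $v_-$, one following the first representative hit upward and one following the first hit downward; at each level at most two nodes are visited, so the cost stays $O((1/\epsilon)\log_B K)$. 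Your ``descend into the single child'' step, and the $O(1)$-carried-segment justification built on it, do not hold.

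\textbf{Your navigation invariant is not maintainable under deletions.} You want each node's per-child summary to reflect \emph{all} segments in the child's subtree, including buffered inserts and excluding buffered deletes. For inserts this is fine, but when a buffered delete at node $v$ kills the current representative of some child of an ancestor $u$, recovering $u$'s new representative requires finding the next-farthest-extending segment deep inside the subtree --- not an $O(1)$ local fix. The ``$\prec$-threshold'' you propose to carry down does not help, since pending deletions are arbitrary segments, not a contiguous $\prec$-range. The paper avoids this entirely by organising the tree as an external \emph{priority search tree}: each internal node $v$ holds a buffer $\cS_v$ of the $\Theta(B)$ farthest-extending segments in its $y$-range not already held by an ancestor, and the per-child representative is simply the minimum of $\cS_{v_i}$. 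A deletion that removes a representative is repaired locally (the next element of $\cS_{v_i}$), and when $\cS_{v_i}$ underflows it is refilled from the grandchildren's $\cS$-buffers, with the amortization absorbing the cascade. During a query the paper does not rely on the summaries being globally current; instead it \emph{partially flushes} the deletion buffer of each visited node into the child about to be visited (and cancels matching insert/delete pairs), so the nodes along the two search paths are up-to-date at the moment they are read. This partial-flush-on-query is the missing mechanism in your proposal.

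In short: your buffering gives the right update bound, but to make queries correct and worst-case $O((1/\epsilon)\log_B K)$ you need (i) the two-pointer search rather than a single descent, and (ii) either the priority-search-tree layout with its local refill, or an equivalent device, together with on-path flushing of pending deletes --- none of which your outline supplies.
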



Our second auxiliary structure answers vertical ray-shooting queries among non-intersecting segments whose endpoints lie in a set of $\Bd+1$ vertical lines. These vertical lines define $\Bd$ vertical slabs, hence the structure is called a \textit{multislab} structure.  We obtain the following result, the proof of which is the topic of Section~\ref{sec:multislab}:

\begin{theorem}[Multislab structure]
\label{lem:multislab}
For any constant $0 < \epsilon \leq 1/2$ and set of $K$ non-intersecting segments whose endpoints lie in $\Bd+1$ vertical lines, we can create a data structure which supports vertical ray-shooting queries in $O( (1/\epsilon) \cdot \log_B K)=O( \log_B K)$ \ios\ and insertions and deletions in $O((\log_B K)/(\epsilon \cdot B^{1-\epsilon}))=O((\log_B K)/B^{1-\epsilon})$ amortized \ios. This data structure uses $O(K)$ space and it can be constructed in $O((K/B) \log_B K)$ \ios. If the segments are already sorted according to a total order, it can be constructed in $O(K/B)$ \ios.
\end{theorem}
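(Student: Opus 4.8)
The plan is to reduce the multislab problem to a single augmented, buffered B-tree, taking advantage of the fact that with $r := \Bd$ slabs there are only $O(r^2)=O(B^\epsilon)\le O(\sqrt{B})$ distinct multislabs, so that any table indexed by a \emph{(child of a node, slab)} pair fits in $O(1)$ blocks. First I would pin down the geometry: since every endpoint lies on one of the $r+1$ vertical lines, each of the $K$ segments spans exactly one multislab, so the sets $S_m$ (segments spanning multislab $m$) partition the input. Fix a slab $s$; the set $T_s=\bigcup_{m\ni s}S_m$ of segments crossing it is pairwise non-crossing with every member defined throughout the slab, hence totally ordered vertically, and a query $(s,p)$ asks for the vertical successor of $p$ in $T_s$. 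A sweep-line argument (insert/delete segments as the sweep passes their endpoints; the relative vertical order of coexisting segments never changes) shows there is a single total order $\prec$ on all $K$ segments whose restriction to each $T_s$ is the vertical order on $T_s$ --- concretely, one can assign each segment a real \emph{height} consistent with vertical order wherever two segments overlap in $x$. These heights are computed by one $x$-sweep at sorting cost, and ``already sorted according to a total order'' means sorted by height.

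Next I would build the structure: a B-tree $\cT$ storing the segments in $\prec$-order, with branching factor $\Theta(\Bd)$ and $\Theta(B)$ segments per leaf, so that $\cT$ has height $O((1/\epsilon)\log_B K)$ and $O(K/B)$ nodes. Each internal node $u$ carries a table $A_u$ that records, for every child $c$ and every slab $s$, the $\prec$-minimum and $\prec$-maximum segments of $c$'s subtree that cross slab $s$, together with their coordinates; this is $O(r^2)=O(B^\epsilon)$ entries of $O(1)$ words, i.e.\ $O(1)$ blocks, so over all nodes the space is $O(K)$. A query $(s,p)$ descends from the root: at node $u$ it reads $A_u$ and moves to the first child $c$ (in $\prec$-order) whose recorded subtree-maximum crossing slab $s$ exists and lies strictly above $p$ at $x_p$. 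Because ``crosses slab $s$ and lies above $p$'' is an upward-closed subset of $T_s$ in the order $\prec$, that child is exactly the one containing the answer; the leaf then yields it directly, for a total of $O((1/\epsilon)\log_B K)$ \ios.

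For updates I would use the buffering technique of~\cite{BrodalF03,Arge03}, exactly as in the left/right structure of Theorem~\ref{lem:left_right}: an insertion or deletion is routed by $\prec$-key, each node holds a buffer of $\Theta(B)$ pending updates, and flushing propagates updates one level down (and, once updates reach leaves, propagates the resulting changes to the tables $A_u$ back up toward the root). Reading all $\Bd$ children's tables costs $O(\Bd)$ \ios\ because each table is $O(B^\epsilon)\le O(\sqrt{B})$ words and $\epsilon\le 1/2$; hence a flush costs $O(\Bd)$ \ios\ per $\Theta(B)$ updates, i.e.\ $O(B^{\epsilon/2}/B)$ amortized per update per level, and over the $O((1/\epsilon)\log_B K)$ levels this is $O\!\bigl((\log_B K)/(\epsilon\, B^{1-\epsilon/2})\bigr)\le O\!\bigl((\log_B K)/(\epsilon\, B^{1-\epsilon})\bigr)$. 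Queries stay correct in the presence of pending updates by scanning each of the $O((1/\epsilon)\log_B K)$ nodes' $O(1)$-block buffers on the search path and adjusting the routing decision accordingly --- the same device already needed for Theorem~\ref{lem:left_right}. Building $\cT$ bottom-up (computing each $A_u$ from its children) costs $O(K/B)$ \ios\ once the segments are sorted by $\prec$-key, hence $O((K/B)\log_B K)$ \ios\ in general.

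The main obstacle is the query. Naively one would run a separate $O((1/\epsilon)\log_B K)$-\io\ successor search in each of the $O(r^2)$ multislabs that contain $s$, which is a factor $r=\Bd$ too slow; the whole point of the construction is that a \emph{single} consistent vertical order $\prec$ collapses these into one root-to-leaf descent, while the inequality $r^2=B^\epsilon\le\sqrt{B}$ keeps each per-node routing table within $O(1)$ blocks. The remaining delicate points --- making sure a query notices buffered, not-yet-settled updates, and the correctness of the sweep-based consistent ordering (including tie-breaking for segments that meet only on a grid line) --- are handled precisely as in the already-established Theorem~\ref{lem:left_right} and by routine case analysis.
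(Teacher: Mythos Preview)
The central gap is in how you handle insertions. You write that an insertion is ``routed by $\prec$-key'', but a newly inserted segment has no $\prec$-key: the total order $\prec$ was built (by your sweep) only for the initial $K$ segments, and in general there is \emph{no} position in the existing $\prec$ where a new segment can be placed consistently. Concretely, let $\alpha$ live only in slab $1$ and $\beta$ only in slab $2$; they are geometrically incomparable, and suppose the sweep happened to put $\beta\prec\alpha$. Now insert a segment $\tau$ spanning slabs $1$ and $2$ that lies above $\alpha$ in slab $1$ and below $\beta$ in slab $2$. All three segments are pairwise non-crossing, yet the geometry forces $\alpha\prec\tau$ and $\tau\prec\beta$, contradicting $\beta\prec\alpha$. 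So it is not merely that you cannot \emph{compute} a key for $\tau$ cheaply; no key exists, and buffered routing cannot even begin. (At the data-structure level the symptom is that the separating keys at a node of your B-tree may span none of $\tau$'s slabs, leaving you nothing to compare $\tau$ against.)

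This is exactly the obstacle the paper singles out and solves. Its fix is to break every inserted (and every deleted) segment into at most $B^{\epsilon'}$ \emph{unit} segments, one per slab. A unit segment in slab $s_i$ is geometrically comparable to every stored segment that crosses $s_i$, so it can be routed using the $s_i$-column of the per-node table and slotted anywhere between its $s_i$-predecessor and $s_i$-successor; this always yields a valid linear extension. The $B^{\epsilon'}$-fold blow-up is precisely what consumes the extra $B^{\epsilon/2}$ in the denominator and turns the bound into $O((\log_B K)/B^{1-\epsilon})$ rather than the $O((\log_B K)/B^{1-\epsilon/2})$ your calculation gives; a periodic rebuild after $K/B^{\epsilon'}$ updates keeps the space linear. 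Your query algorithm, per-node table $A_u$, space accounting, and construction cost are essentially the paper's; only the update mechanism is missing this idea.
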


A major challenge faced by previous multislab structures is how to efficiently support insertions.  At a high-level, it is hard to deal with insertions in cases where a total order is maintained: each time a new segment gets inserted we need to determine its position in the total order, which cannot be done quickly. Arge and Vitter~\cite{ArgeV04} developed a deletion-only multislab data structure and then used the so-called logarithmic method~\cite{Bentley79} which allowed them to handle insertions in $O(\log_B^2 K)$ \ios. Later Arge, Brodal and Rao~\cite{ArgeBR12} developed a more complicated multislab structure supporting insertions in amortized $O(\log_B K)$ \ios\ by performing separate case analysis depending on the value of $B$. 

Here, we support insertions in a much simpler way by breaking each inserted segment into smaller \textit{unit segments} whose endpoints lie on two consecutive vertical lines and can be compared easily to the segments already stored. This way, we are able to support insertions easily in $O(\log_B K)$ \ios. Finally, we add buffering and obtain sublogarithmic update bounds. 

\subsection{Notation and Preliminaries}

\paragraph{External Memory Model.} Throughout this paper we focus on the external memory model of computation. $N$ denotes the number of segments in the planar subdivision, $B$ the block size and $M$ the number of elements that fit in internal memory. We assume that $M \ll N$ and $2 \leq B \leq \sqrt{M}$ (the \emph{tall cache assumption}). It is well-known that sorting $K$ elements requires $\Theta((K/B) \log_{M/B} (K/B))$ \ios~\cite{AggarwalV88}. Given that $B \leq \sqrt{M} $, this bound is $O((K/B) \log_B K)$. We use this bound for sorting in many places without further explanation. 

\paragraph{{\normalfont{\textbf{Ray-shooting Queries.}}}} In the rest of this paper, we focus on answering vertical ray-shooting queries in a dynamic set of non-intersecting line segments. Let $S$ be the set of segments of the polygonal subdivision $\Pi$. Given a query point $p$, the answer to a vertical ray-shooting query is the the first segment of $S$ hit by a vertical ray emanating from a query point in the $(+y)$ direction. Based on standard techniques (see e.g.~\cite{ArgeV04}), for connected polygonal subdivisions $\Pi$ with faces of size $O(1)$, a planar point location query for a point $p$ can be answered in $O(\log_B N)$ I/Os after answering a vertical ray-shooting query for $p$.

\paragraph{{\normalfont{\textbf{$\Be$-Trees.}}}}
All tree structures that we will use are variants of the
$\Be$-Trees~\cite{BrodalF03} which are $B$-trees except that the internal nodes have at most $\Be$ (and not $B$) children; the leaves still store $\Theta(B)$ data items. For constant $\epsilon$, this does not change the asymptotic height of the tree or the search cost, both remain $O((1/\epsilon) \cdot \log_BN) = O(\log_B N)$. 

\end{onlymain}

\begin{onlymain}
\section{Overall Structure}
\label{sec:overview}

In this Section we prove Theorem~\ref{thm:main}, using the data structures of Theorems~\ref{lem:left_right} and~\ref{lem:multislab} (detailed in Sections~\ref{sec:left_right} and~\ref{sec:multislab}, respectively). 
Given $N$ non-intersecting segments in the plane and a constant $0 < \epsilon \leq 1/2 $, we construct a $O(N)$-space data structure which answers vertical ray-shooting queries in $O( (1/\epsilon)^2 \cdot\log^2_B N) = O(\log^2_B N)$ \ios\ and supports updates in $O((\log_B N)/(\epsilon \cdot B^{1-\epsilon})) =  O((\log_B N)/B^{1-\epsilon})$ amortized \ios. Throughout this section we let $\epsilon ' = \epsilon/2$.

\paragraph{{\normalfont{\textbf{The Data Structure.}}}} As in the previous works on planar point location, our primary data structure is based on the \textit{interval tree} (the external interval tree defined in~\cite{ArgeV03}). Our interval tree $\cI$ is a $\Bp$-tree which stores the $x$-coordinates of segment endpoints in its leaves. Here we assume for clarity of presentation that the interval tree is static, i.e. all new segments inserted share $x$-coordinates with already stored segments; in Section~\ref{sec:restucture} we remove this assumption and extend our data structure to accommodate new $x$-coordinates and achieve the bounds of Theorem~\ref{thm:main}.

Each node of $\cI$ is associated with several secondary structures, as we explain later, and each segment is stored in the secondary structures of exactly one node of $\cI$. Each node $v$ of $\cI$ is associated with a vertical slab $s_v$. The slab of the root is the whole plane. For an internal node $v$, the slab $s_v$ is divided into $\Bp$ vertical slabs $s_1,\dotsc,s_{\Bp}$ corresponding to the children of $v$, separated by vertical lines called \textit{slab boundaries}, such that each slab $s_i$ contains the same number of vertices of $\Pi$ from slab $s_v$.

Let $S$ be the set of segments that compose $\Pi$. Each segment $\sigma \in S$ is \textit{assigned} to a node $v$ of $\cI$. This is the highest node $v$ of $\cI$ such that $\sigma$ is completely contained in slab $s_v$ and intersects at least one slab boundary partitioning $s_v$; if such an internal node $v$ does not exist, then $\sigma$ is assigned to a leaf $v$ such that $\sigma$ is completely contained in its slab $s_v$. Segments assigned to internal nodes are stored in the secondary structures of those nodes, whereas segments assigned to leaves are stored explicitly in the corresponding leaf. By construction of the slab boundaries, each leaf stores $O(B)$ segments in $O(1)$ blocks. 

Consider a segment $\sigma$ assigned to a node $v$ of $\cI$. Let $s_{\ell}$ and $s_r$ be the children slabs of $s_v$ where the left and right endpoints of $\sigma$ lie. We call the segment $\sigma \cap s_{\ell}$ the \textit{left subsegment} of $\sigma$, the segment $\sigma \cap s_r$ the \textit{right subsegment} of $\sigma$ and the rest of $\sigma$ (which spans children slabs $s_{\ell+1},\dotsc,s_{r-1}$) is its middle subsegment. See Figure~\ref{fig:slabs-segments} for an illustration.  In this example, the left subsegment is $\sigma \cap s_5$, the right subsegment is $\sigma \cap s_2$, and the portion of $\sigma$ in $s_3$ and $s_4$ is the middle subsegment. 

\begin{figure}[t]
\centering
\includegraphics[scale= 0.45]{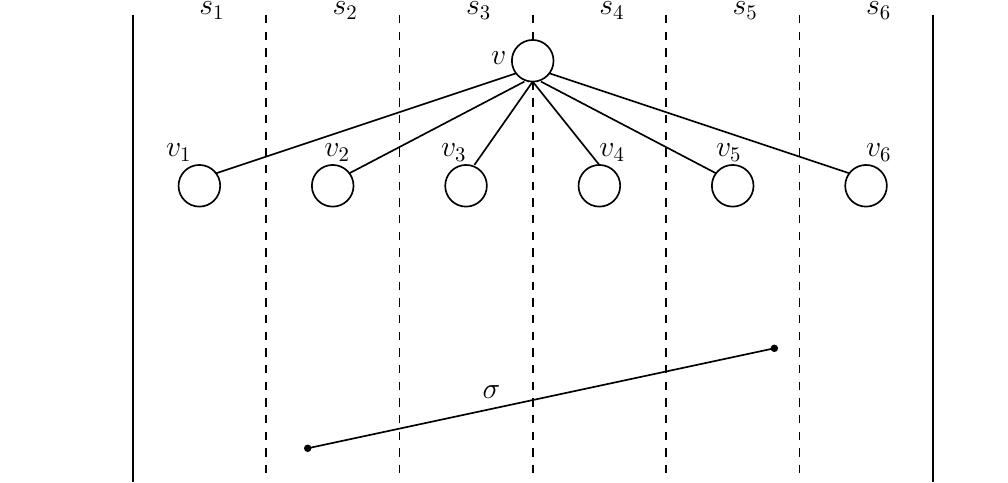}
\caption{The slab of node $v$ of the interval tree $\cI$ is divided into slabs $s_1,\dotsc,s_6$ corresponding to its children $v_1,\dotsc,v_6$. Segment $\sigma$ is assigned to node $v$, with left subsegment in slab $s_2$, right subsegment in $s_5$ and the middle subsegment crosses slabs $s_3,s_4$.}
\label{fig:slabs-segments}
\end{figure}

Let $S_v$ be the set of segments assigned to a node $v$ of $\cI$. To store segments of $S_v$, node $v$ of $\cI$ contains the following secondary structures:

\begin{enumerate}\itemsep.75em 
 \item A multislab structure $\cM$ which stores the set of middle segments.
  \item $\Bp$ left structures $L_i$, for $1 \leq i \leq \Bp $, storing the left (sub)segments of slab $s_i$.
 \item $\Bp$ right structures $R_i$, for $1 \leq i \leq \Bp $, storing the right (sub)segments of slab $s_i$.
 
\end{enumerate}

In addition, each internal node $v$ contains an insertion buffer $I_v$ and deletion buffer $D_v$, each storing up to $B$ segments. 

\paragraph{{\normalfont{\textbf{Construction and Space Usage.}}}} For every node $v$, the buffers $I_v$ and $D_v$ fit in $O(1)$ blocks, since they store at most $B$ segments. By Theorems~\ref{lem:left_right} and~\ref{lem:multislab}, a secondary structure storing $K$ segments uses $O(K)$ space. Since each segment of $S_v$ is stored in at most 3 secondary structures, overall secondary structures of $v$ use $O(|S_v|)$ space. Thus each node $v$ uses $O(|S_v|)$ space. We get that our data structure uses overall $O(\sum_{v \in \cI} |S_v|)=O(N)$ space. The interval tree can be constructed in $O((N/B) \log_B N)$ \ios. This can be done by sorting the segments by their endpoints' $x$-coordinates and then determining all slab boundaries to create a balanced interval tree.
By Theorems~\ref{lem:left_right} and~\ref{lem:multislab}, all secondary structures of a node $v$ of $\cI$ can be constructed in $O((|S_v|/B) \log_B |S_v|)$ \ios\ . Thus, all secondary structures of the tree can be constructed in $O((\sum_{v \in \cI} |S_v|/B) \cdot \log_B N  ) = O((N/B) \log_B N)$ \ios.

\paragraph{{\normalfont{\textbf{Queries.}}}} 
To answer a vertical ray-shooting query for a point $p$, we traverse the root-to-leaf path of $\cI$ based on the $x$-coordinate of $p$, while maintaining a segment $\sigma$ (initialized to \texttt{null}) which is the answer to the query among segments assigned to nodes we have traversed so far. At each node $v$ visited along this path, we first update buffers $I_v$ and $D_v$ by removing from both of them all segments (if any) of $I_v \cap D_v$. Then, we perform a vertical ray-shooting on the secondary structures of $v$; in particular we ray-shoot on the multislab structure and the left and right structures $L_i$ and $R_i$, for $i$ such that the query point $p$ is in slab $s_i$\footnote{Minor detail: For each secondary structure considered, we first perform insertions/deletions of the corresponding segments from buffers $I_v$ and $D_v$.}. After checking the secondary structures, we update $\sigma$ if a closer segment above $p$ is found as a result. Next, we ray-shoot among segments stored in $I_v$ and update $\sigma$ if necessary. Finally, we determine which child $v_i$ of $v$ to visit, and flush any segments of $D_v$ that are contained in the slab of $v_i$ to $D_{v_i}$; this way we make sure that information about deleted segments is updated throughout the root-to-leaf path and no deleted segment can be considered as an answer to the query. We then continue the process at $v_i$. Once a leaf node is reached, we simply compare the $B$ segments it contains with $p$ and return the closest segment above $p$ among them and $\sigma$.

\vspace{0.1cm}
\noindent \textit{Bounding the query cost:} Since any root-to-leaf path of $\cI$ 
has length $O((1/\epsilon')\cdot \log_B N)$, each secondary data structure supports ray-shooting queries in $O((1/\epsilon') \cdot\log_B N)$ \ios\ (due to Theorems~\ref{lem:left_right} and~\ref{lem:multislab}) and we check $O(1)$ secondary structures per node, we
get that a query is answered in $O((1/\epsilon')^2 \cdot \log^2_B N) = O(\log^2_B N)$ \ios. Note that in each node $v$ of the root-to-leaf path visited,  the operations involving $I_v$ and $D_v$ require $O(1)$ \ios, thus they increase the total cost by at most a $O(1)$ factor.

\paragraph{{\normalfont{\textbf{Insertions.}}}} To handle insertions, we use the insertion buffers stored in nodes of $\cI$. When a new segment $\sigma$ is inserted, we insert it in the insertion buffer of the root. Let $v$ be an internal node with children $v_1,\dotsc,v_{\Bp}$. Whenever $I_v$ becomes full, it is flushed. Segments of $I_v$ that cross at least one slab boundary partitioning $s_v$ are inserted in the secondary structures of $v$; segments that are contained in the slab $s_i$ of $v_i$ are inserted in $I_{v_i}$, for $1 \leq i \leq \Bp$. In case $I_v$ becomes full for some node $v$ whose children are leaves, we insert those segments explicitly at the corresponding leaves. When a leaf becomes full, we restructure the tree using split operations on full nodes. 

\vspace{0.1cm}
\noindent \textit{Bounding the insertion cost:}
We compute the amortized cost of an insertion by considering three components:
\begin{enumerate}[(i)]\itemsep.25em
 \item The cost for moving segments between insertion buffers.  Whenever an insertion buffer $I_v$ gets full, it forwards segments to the buffers of its $\Bp$ children performing $O(\Bp)$ \ios. Since a flushing occurs every $B$ insertions in $I_v$, the amortized cost of such operations is $O(\Bp/B) = O(1/(B^{1-\epsilon'}))$.  Each segment will move in at most $O( (1/\epsilon') \log_B N)$ insertion buffers before it is inserted in the secondary structures of a node (or in a leaf). Thus the amortized cost for moving between buffers is $ O((\log_B N)/(\epsilon' \cdot B^{1-\epsilon'}))$.
 
 \item The insertion cost in the secondary structures. By Theorems~\ref{lem:left_right} and~\ref{lem:multislab} we get that insertions in secondary structures require $O((\log_B N)/(\epsilon \cdot B^{1-2\epsilon'}))$ \ios.

 \item The cost of restructuring the tree after insertions when a leaf becomes full. We show in Section~\ref{sec:restucture} that the restructuring requires $O\big(\frac{\log_B N}{\epsilon' \cdot B^{1-\epsilon'}}\big)$ amortized \ios, by slightly modifying our primary interval tree data structure.   
\end{enumerate}
We conclude that our data structure supports insertions in amortized $ O(\log_B N/(\epsilon' \cdot B^{1-2\epsilon'}))  = O(\log_B N/B^{1-\epsilon})$ \ios.

\paragraph{{\normalfont{\textbf{Deletions.}}}} To support deletions, we use the deletion buffers stored in all nodes of $\cI$. To delete a segment $\sigma$, we first check whether $\sigma$ is in the insertion buffer $I_r$ of the root $r$ and in that case we delete it; otherwise we store it in $D_r$. Similar to insertions, whenever $D_v$ gets full for some internal node $v$ with children $v_1,\dotsc,v_{\Bp}$, we flush $D_v$. The segments of $D_v$ crossing at least one slab boundary partitioning $s_v$ are deleted from the corresponding secondary structures associated with $v$; the other segments of $D_v$ are moved to buffers $D_{v_i}$; in case a segment $\sigma$ inserted in $D_{v_i} \cap I_{v_i}$, we delete it from both buffers. In case $D_v$ becomes full for some $v$ parent of leaves, we delete those segments explicitly from the corresponding leaves.

\vspace{0.3cm}
\noindent \textit{Bounding the deletion cost:}
The deletion cost has three components:
\begin{enumerate}[(i)]\itemsep.25em
 \item Moving segments between the deletion buffers. Using the same argument as for insertions, we get 
 that this requires$O(\log_B N/(\epsilon' \cdot B^{1-\epsilon'}))$ \ios, amortized.
 
 \item The cost of deletion in the secondary structures. By Theorems~\ref{lem:left_right} and~\ref{lem:multislab} we get that deletions in secondary structures require amortized $O(\log_B N/(\epsilon' \cdot B^{1-2\epsilon'}))$ \ios.

 \item The cost of restructuring the tree. Every $N/2$ deletions, we rebuild the structure using $O((N/B) \log_B N)$ \ios, to get and amortized restructuring cost of $ O((\log_B N)/B) $ \ios. 
 
\end{enumerate}
Overall deletions are supported in amortized  $O(\log_B N/(\epsilon' \cdot B^{1-2\epsilon'})) = O(\log_B N/(B^{1-\epsilon}))$ \ios.

\end{onlymain}
\begin{onlymain}
\section{Left and Right Structures}
\label{sec:left_right}

In this section we prove Theorem~\ref{lem:left_right}. Given $K$ points all of whose right (left) endpoints lie on a single vertical line, we construct a data structure which answers vertical ray-shooting queries on those segments in $O(\log_B K)$ \ios\ and supports insertions and deletions in $O((\log_B K)/B^{1-\epsilon})$ amortized \ios\, for a constant $0 < \epsilon \leq 1/2$.

We describe the structure for the case where we are given a set $\cL$ of $K$ segments whose right endpoints have the same $x$-coordinate (left structure)\footnote{Recall from Section~\ref{sec:overview} that we call left structures the ones storing the left subsegment of a segment $\sigma$, thus all subsegments stored in a left structure have the same $x$-coordinate of right endpoints.}. The case where the left endpoints of the segments have the same $x$-coordinate (right structure) is completely symmetric. For a segment $\sigma$, we will refer to the $y$-coordinate of its right endpoint as the \textit{$y$-coordinate of $\sigma$}. Conversely we define the \textit{$x$-coordinate of $\sigma$} to be the $x$-coordinate of its left endpoint. 

%

\paragraph{Total Order.} We assume that the segments in $\cL$ are ordered according to their $y$-coordinates. We can always order the segments according to this total order in $O((K/B) \log_{B}K)$ \ios. 

\paragraph{The Data Structure.} We store all segments of $\cL$ in an augmented $\Be$-tree $\cT$ which supports vertical ray-shooting queries, insertions and deletions. The degree of each node is between $\Be/2$ and $\Be$, except the root which might have degree in the range $[2,\Be]$, and leaves store $\Theta(B)$ elements. For a node $v \in \cT$, let $\cT_v$ be the subtree rooted at $v$. Since the segments are sorted according to their $y$-coordinates, each subtree $\cT_v$ corresponds to a range of $y$-coordinates, which we call the $y$-range of node $v$. Let $v$ be an internal node of $\cT$ with children $v_1,\dotsc,v_{\Be}$. Node $v$ stores the following information:

\begin{enumerate}\itemsep.65em
\item A buffer of segments $\cS_v$ of capacity $B$ which contains segments in the $y$-range of $v$ whose left endpoints have the smallest $x$-coordinates (i.e., segments that extend the farthest from the vertical line) and are not stored in any buffer $\cS_w$ for an ancestor $w$ of $v$. In other words, $\cT$ together with segments of buffers $\cS_v$ form an external memory priority search tree~\cite{ArgeSV99}.
\item An insertion buffer $I_v$ and a deletion buffer $D_v$, each storing up to $B$ segments.
\item A list $\cM_v$ that contains, for each child $v_i$, the segment with minimum $x$-coordinate stored in $\cS_{v_i}$. We call this the \textit{minimal} segment for child $v_i$. 
\end{enumerate}

The data structure satisfies the following invariants: For each node $v \in \cT$, either $|\cS_{v}| \geq B/2$ or if $|\cS_{v}| < B/2$, then $I_v$ and $D_v$ are empty and all buffers stored in descendants $v$ are empty. Also, for each node $v$, buffers $\cS_v, I_v$ and $D_v$ are disjoint. Finally, for a leaf $v$, $I_v$ and $D_v$ are empty.

\paragraph{Construction and Space Usage.}  Overall buffers and lists of each node contain $O(B)$ segments, i.e. they can be stored in $O(1)$ blocks. Thus $\cT$ can be stored in $O(K/B)$ blocks, i.e. it requires $O(K)$ space. Construction of $\cT$ requires $O(\frac{K}{B} \log_B K)$ \ios, since we need to sort all $K$ segments according to their $y$-coordinates. If the segments are already sorted according to their $y$-coordinate, then $\cT$ can be created in $O(K/B)$ \ios.

\paragraph{Queries in the static structure.}
To get a feel for how our structure supports queries, we first show how to perform queries in the static case, i.e., assuming there are no insertions and deletions and all buffers $I_v$ and $D_v$ are empty. Later we will give a precise description of performing queries in the fully dynamic structure.

Let $\rho^{+}$ be the ray emanating from $p$ in the $(+y)$ direction and $\rho^{-}$ the ray emanating from $p$ in the $(-y)$ direction. We query the structure by finding the first segment hit by both $ \rho^+$ and $\rho^{-}$. We keep two pointers, $v_+$ and $v_-$, initialized at the root. We also keep the closest segments $\sigma_+$ and $\sigma_-$ seen so far in the $(+y)$ and $(-y)$ direction respectively (initialized to $+ \infty$ and $- \infty$). At each step, we update both $v_+$ and $v_-$ to move from a node of depth $i$ to a node of depth $i+1$. While at level $i$, $v_-$ and $v_+$ might coincide, or one of them might be undefined (set to \texttt{null}).

\end{onlymain}

\begin{onlymain}

We now describe the query algorithm.  We start at the root of $\cT$ and advance down, while updating $v_+$, $v_-$, and $\sigma_+$,$\sigma_-$. When at depth $i$, we find the first segment $\sigma_i$ hit by $\rho^{+}$ among $\cS_{v_-}$ and $\cS_{v_+}$ and update $\sigma_+$ if necessary (i.e. if $\sigma_i$ is the first segment hit by $\rho^{+}$ among all segments seen so far). Similarly, we ray-shoot on $\rho^{-}$ among $\cS_{v_-}$ and $\cS_{v_+}$ and update $\sigma_-$ if necessary. To determine in which nodes of depth $i+1$ to continue the search, we ray-shoot on $\rho^+$ among $\cM_{v_-}$ and 
$ \cM_{v_+} $ and also ray-shoot on $\rho^{-}$ among $\cM_{v_-}$ and 
$ \cM_{v_+} $ (i.e., all minimal segments of children of $v_-$ and $v_+$). Let $\sigma_{m+}$ be the first segment in $M_{v+} \cup M_{v-}$  hit by $\rho^{+}$ (if such a segment exists) and $v_s$ be the node containing $\sigma_{m+}$  (if $\sigma_{m+}$ exists). If the $y$-range of $v_s$ is higher than the $y$-coordinate of $\sigma_+$ or if $\sigma_{m+}$ does not exist, we leave $v_+$ undefined for level $i+1$. Otherwise, we set $v_+ = v_s $. Similarly, call $\sigma_{m-}$ the first minimal segment of $M_{v+} \cup M_{v-}$ hit by $\rho^{-}$ and $v_p$ be the node containing $\sigma_{m-}$ (if such a segment exists). If the $y$-range of $v_p$ is lower than the $y$-coordinate of $\sigma_-$ or if $\sigma_{m-}$ does not exist, we leave $v_-$ undefined for level $i+1$. Otherwise we set $v_- = v_p$. 

If both $v_+$ and $v_-$ are undefined for the next level $i+1$, we stop the procedure and output $\sigma_+$ as the result to the vertical ray-shooting query. Otherwise we repeat the same procedure in the next level. When we reach a leaf level, we find the first segment hit by $\rho^{+}$ among $\cS_{v_-}$ and $\cS_{v_+}$, update $\sigma+$ if necessary, and output $\sigma+$ as the result of the query. 

\begin{figure}[t]
\centering
\includegraphics[scale= 0.45]{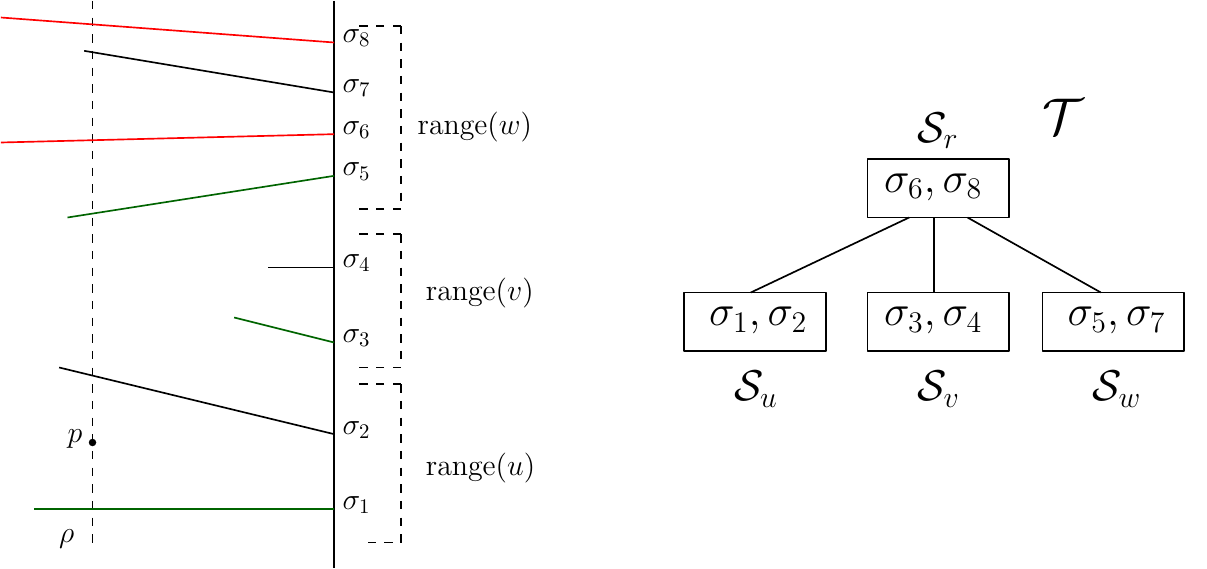}
\caption{Example of the query algorithm in the left structure: Left column shows the segments stored in $\cT$, the query point $p$ and the vertical ray $\rho$ emanating from $p$. Right column shows buffers $\cS$ of the nodes of $\cT$. Red segments are stored in the root. For nodes $u,v,w$, the green segment is their minimal segment, i.e., the one stored in list $\cM_r$. By ray-shooting on $\rho$ among green segments, the first segment hit upwards is $\sigma_5$, which is stored in $\cS_w$, thus we set $v_+ = w$. Note that $\sigma_2$ (the correct answer for the query) is not stored in $\cS_w$, i.e., maintaining only $v_+$ produces an incorrect answer. Thus, our algorithm ray-shoots downwards as well, hitting $\sigma_1$, which is stored in $u$, and setting $v_- = u$. Then, by ray-shooting on $\rho$ among $\cS_u$ and $\cS_w$, the first segment we hit upwards of $p$ is $\sigma_2$.}
\label{fig:left_right_query}
\end{figure}

\vspace{0.1cm}
\textit{Remark:} The reader might wonder why we answer vertical ray-shooting queries in both directions and keep two pointers $v_-$ and $v_+$. Isn't it sufficient to answer queries in one direction and keep one pointer at each step? Figure~\ref{fig:left_right_query} shows an example where this is not true and maintaining only the $v_+$ pointer would result in an incorrect answer. 

The formal proof of correctness of this query algorithm is deferred to Appendix~\ref{sec:left_right_app}. 

\end{onlymain}

\begin{onlyapp}
\section{Queries in the Left and Right Structures.}
~\label{sec:left_right_app}

In this Section we give further details on the left (right) structure which were omitted from Section~\ref{sec:left_right}.

\paragraph{Queries.} We begin with the queries and we show the correctness of the query algorithm of the static left (right) structure.

\vspace{0.1cm}

\textit{Correctness:}
The correctness of the query algorithm follows from the next lemma. For a node $v \in \cT$ let $S_{v}$ be the set of segments stored in buffers $\cS$ in $\cT_v$ .

\begin{lemma}
\label{lem:left_right_query_general}
Assume that at the end of the $i$th step of the query algorithm, either $v_+$ or $v_-$ is defined. Then $\sigma_+$ is the first segment hit by $\rho^+$ among the segments of $\cL - (S_{v_-} \cup S_{v_+})$.
\end{lemma}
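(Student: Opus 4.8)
The plan is to prove Lemma~\ref{lem:left_right_query_general} by induction on the step number $i$, mirroring the structure of the (simpler) single-pointer argument but carefully tracking both pointers. The key conceptual point is that, after $i$ steps, $\sigma_+$ has absorbed the answer to $\rho^+$ restricted to all segments \emph{except} those living in the two subtrees $\cT_{v_-}$ and $\cT_{v_+}$ still under consideration; symmetrically $\sigma_-$ records the first segment hit by $\rho^-$ among $\cL - (S_{v_-}\cup S_{v_+})$. One should actually strengthen the inductive hypothesis to include this symmetric statement about $\sigma_-$, since the update of $v_-$ (and hence which segments get "discarded" on the $\rho^+$ side) depends on $\sigma_-$ being correct. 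So the real invariant to carry is: \emph{at the end of step $i$, $\sigma_+$ (resp.\ $\sigma_-$) is the first segment hit by $\rho^+$ (resp.\ $\rho^-$) among $\cL - (S_{v_-}\cup S_{v_+})$, where $S_v=\emptyset$ when $v$ is undefined.}

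For the base case ($i=1$), at the root we ray-shoot $\rho^+$ and $\rho^-$ on $\cS_{\text{root}}$ and on all minimal segments $\cM_{\text{root}}$, so $\sigma_+,\sigma_-$ are correct among $\cS_{\text{root}}$ plus whatever the minimal segments certify. Then I argue, exactly as in the single-pointer base case, that for every child $v$ of the root: if $v$'s $y$-range lies below that of $v_s$ then no segment of $\cT_v$ meets $\rho^+$ (the minimal segment is the lowest-$x$ one, so if \emph{it} isn't hit, none in that subtree are — here one uses the priority-search-tree property that every segment in $\cT_v$ has $x$-coordinate at least that of the minimal segment, hence extends no farther left); and if $v$'s $y$-range lies above that of $v_s$, then the segment $\sigma_{m+}$ already found in $v_s$ lies below everything in $\cT_v$, so those subtrees cannot improve $\sigma_+$. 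The only child whose subtree can still matter for $\rho^+$ is exactly $v_s$, and we either keep it as $v_+$ or — if its whole $y$-range is above $\sigma_+$ — discard it too, which is safe. The symmetric claim handles $v_-$ and $\rho^-$. Hence the invariant holds after step $1$ with $S_{v_+}=S_{v_s}$, $S_{v_-}=S_{v_p}$ (or empty if undefined).

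For the inductive step, assume the invariant after step $i$. At step $i+1$ we ray-shoot $\rho^+$ and $\rho^-$ on $\cS_{v_-}\cup\cS_{v_+}$ and on $\cM_{v_-}\cup\cM_{v_+}$, updating $\sigma_\pm$. The segments newly "released" from consideration are those in $\cT_{v_-}\cup\cT_{v_+}$ that are not in the (at most two) chosen children-subtrees and not in the buffers $\cS_{v_-},\cS_{v_+}$ themselves; for each such released subtree I reuse the two monotonicity observations from the base case (lowest-$x$ minimal segment $\Rightarrow$ if its minimal segment is not hit by $\rho^+$, nothing in it is; and $y$-range ordering $\Rightarrow$ if a hit was already found at a lower sibling, higher siblings are dominated by $\sigma_+$) to conclude these releases do not invalidate $\sigma_+$. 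Combined with the induction hypothesis covering $\cL-(S_{v_-}^{(i)}\cup S_{v_+}^{(i)})$, we get that the updated $\sigma_+$ is correct for $\cL-(S_{v_-}^{(i+1)}\cup S_{v_+}^{(i+1)})$; symmetrically for $\sigma_-$. Finally, the termination/leaf cases: if both pointers become undefined, every remaining subtree was certified to contain no $\rho^+$-hit, so the current $\sigma_+$ is the global answer; at a leaf, $\cS_{v_-}$ and $\cS_{v_+}$ hold the last remaining candidate segments and we compare against them directly.

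The main obstacle is the interaction between the two pointers: one must be careful that when $v_-$ and $v_+$ coincide (or one is undefined), the set $S_{v_-}\cup S_{v_+}$ is exactly the still-relevant set, with no double counting and no gaps — in particular that a subtree "released" on the $\rho^+$ side is genuinely irrelevant to $\rho^+$ even though it may still be tracked on the $\rho^-$ side, and vice versa. The example in Figure~\ref{fig:left_right_query} is precisely the situation showing why a single pointer fails: the correct answer $\sigma_2$ can sit in a subtree that $v_+$ alone would have pruned, and is only recovered because $v_-$ keeps that subtree (via $\cS_u$) alive. Making that argument airtight — that the union of the two surviving subtrees always contains the true answer until it is explicitly found — is the crux, and it rests entirely on the priority-search-tree invariant that minimal segments are $x$-extremal within their subtrees.
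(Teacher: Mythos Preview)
Your overall plan---induction on the depth, with the (correctly) strengthened invariant that both $\sigma_+$ and $\sigma_-$ are first hits among $\cL-(S_{v_-}\cup S_{v_+})$---matches the paper's approach. However, the core case analysis in your base and inductive steps contains a real error.

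You assert that for every child $v$ with $y$-range below $v_s$, ``no segment of $\cT_v$ meets $\rho^+$,'' justified by the priority-search-tree property: the minimal segment extends farthest left, so if it is not hit by $\rho^+$, nothing in the subtree is. This implication is false. ``Not hit by $\rho^+$'' has two causes: (a) the minimal segment does not reach $p.x$, or (b) it reaches $p.x$ but lies \emph{below} $p$ there. Your argument covers only (a). In case (b) other segments in $\cT_v$ may well reach $p.x$ and sit above $p$---this is exactly Figure~\ref{fig:left_right_query}, where $u$ has $y$-range below $v_s=w$, its minimal segment $\sigma_1$ is hit by $\rho^-$ (not $\rho^+$), yet $\sigma_2\in\cT_u$ is the true answer to $\rho^+$. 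Your own final paragraph acknowledges this phenomenon, but your base-case argument contradicts it: you conclude ``the only child whose subtree can still matter for $\rho^+$ is exactly $v_s$,'' which Figure~\ref{fig:left_right_query} refutes.

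The paper's fix is a three-way split on the discarded child $v$ relative to \emph{both} pointers: (i) $y$-range above $v_+$: dominated by $\sigma_+$ (your argument here is fine); (ii) $y$-range below $v_-$: every segment in $S_v$ lies below the minimal segment of $v_-$, which is itself below $p$ (hit by $\rho^-$), hence nothing in $S_v$ is hit by $\rho^+$; (iii) $y$-range strictly between $v_-$ and $v_+$: the minimal segment of $v$ is hit by neither $\rho^+$ (else $v_s$ would be lower) nor $\rho^-$ (else $v_p$ would be higher), so it genuinely fails to reach $p.x$, and now your priority argument applies. Case (ii) is what your write-up is missing, and case (iii) needs the extra ``not hit by $\rho^-$ either'' observation before the $x$-extremality argument becomes valid. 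Once you rewrite the base and inductive steps with this three-case split, the proof goes through exactly as the paper's does.
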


\begin{proof}
We prove the lemma by induction. 

\vspace{0.1cm}

\textit{Induction Base:} At the end of the first step, $v_+$ and $v_-$ are children of the root $r$ and $\sigma_+$ is the first segment hit by $\rho^+$ among all segments stored at the root (in $\cS_r$ and $\cM_r$). By definition of $v_s = v_+$, for any child of the root $v$ with higher $y$-range than $v_+$, $\sigma_+$ is below all segments of $S_{v}$. Similarly, for any child of the root $v'$ with smaller $y$-range than $v_-$ (if $v_-$ exists), there is no segment in $S_{v'}$ hit by $\rho^+$ (since there exists a segment in $\cS_{v_-}$ hit by $\rho^-$). Finally, for any child $v''$ of the root whose $y$-range is between the range of $v_-$ and $v_+$, by definition of $v+$, there is no segment in $S_{v''}$ hit by $\rho^{+}$. We conclude that $\sigma_+$ is the first segment hit by $\rho^+$ among the segments in $\cL - (S_{v_-} \cup S_{v_+})$.

\vspace{0.1cm}

\textit{Inductive Step:} Assume the lemma holds at the end of step $i$, i.e. we have at least one of $v_+$ and $v_-$ at level $i$ and $\sigma_+$ is the first segment hit by $\rho^+$ among all segments in $\cL - (S_{v_+} \cup S_{v_+} )$. 

During $(i+1)$th step we ray-shoot on $\rho^{+}$ among segments stored in $\cS_{v_+}, \cS_{v_-}, \cM_{v_+}$ and $\cM_{v_-}$, and update $\sigma_+$ if necessary. Let $v_s$ be the node containing the first segment hit by $\rho^+$ among $\cM_{v_+}$ and $\cM_{v_-}$ (if such a segment exists). Let also $v_p$ be the node containing the first segment hit by $\rho^-$ among $\cM_{v_+}$ and $\cM_{v_-}$ (if such a segment exists).

 By definition of $v_s $, for any node $v$ which is a child of $v_-$ or $v_+$ with higher $y$-range than $v_s$, $\sigma_+$ is below all segments of $S_{v}$. Similarly, for a node $v'$ which is a child of $v_-$ or $v_+$ with smaller $y$-range than $v_p$ (if $v_p$ exists), there is no segment of $S_{v'}$ hit by $\rho^+$ (since there exists a segment in $\cS_{v_p}$ hit by $\rho^-$). Finally, for any child $v''$ of $v_-$ or $v_+$ whose $y$-range is between the range of $v_-$ and $v_+$, by definition of $v+$, there is no segment in $S_{v''}$ hit by $\rho^{+}$. 
 
 Recall that by the induction hypothesis $\sigma_+$ at the end of the previous step was the first segment hit by $\rho^+$ among segments of $ \cL - (S_{v_+} \cup S_{v_+})$. Now we updated $\sigma_+$ and showed that there is no segment hit by $\rho^{+}$ before $\sigma_+$ in any subtree other than $\cT_{v_s}$ or $\cT_{v_p}$. We conclude that $\sigma$ is the first segment hit by $\rho^+$ among the segments in $\cL - (S_{v_s} \cup S_{v_p})$. Since at the end of the $(i+1)$th step we set $v_- = v_p$ and $v_+ = v_s$, the lemma follows. 

\end{proof}

We now explain how Lemma~\ref{lem:left_right_query_general} implies the correctness of the query algorithm. To see that, let $i$ be the last level where either $v_+$ or $v_-$ is defined; at the beginning of the query algorithm at level $i$, $\sigma_+$ is the first segment hit by $\rho^{+}$ among segments of $\cL - (S_{v_-} \cup S_{v_+})$. Moreover at the end of this step, both $v_s$ and $v_p$ are not defined, i.e., for each child $v$ of $v_-$ or $v_+$ there is no segment in $S_v$ hit by $\rho^{+}$ before $\sigma_+$. Since $S_{v_-} \cup S_{v_+} = \cS_{v_-} \cup \cS_{v_+} \cup (\cup_{v} S_v) $, we get that $\sigma_+$ is the first segment hit by $\rho^{+}$ among segments of $\cL - (\cS_{v_-} \cup \cS_{v_+})$. By checking all segments of $\cS_{v_-} \cup \cS_{v_+}$ and updating $\sigma_+$ if necessary, we make sure that $\sigma_+$ is the first segment hit by $\rho^{+}$ among segments of $\cL$. 

\end{onlyapp}

\begin{onlymain}
\vspace{0.1cm}

\textit{Bounding the query cost:} To count the cost, observe that in each step we move down the tree by one level and perform operations that require $O(1)$ \ios, as we check $O(B)$ segments stored in the current nodes $v_-$ and $v_+$. Since the height of the tree is $O((1/\epsilon) \log_B K)$, a query is answered in $ O((1/\epsilon) \log_B K)) = O(\log_B K)$ \ios.



\paragraph{Insertions.}
Assume we want to insert a segment $\sigma$ into the left structure $\cL$. If the $x$-value of $\sigma$ is smaller than the maximum $x$-value of a segment stored in the buffer of the root $\cS_r$, we insert $\sigma$ into $\cS_r$. Otherwise we store $\sigma$ in the insertion buffer of the root $I_r $. Note that insertion of $\sigma$ in $\cS_r$ might cause $\cS_r$ to overflow (i.e., $|\cS_r| = B+1 $); in that case we move the segment of $\cS_r$ with the maximum $x$-value into the insertion buffer of the root $I_{r}$.

 Let $v$ be an internal node with children $v_1,\dotsc,v_{\Be}$. Whenever the insertion buffer $I_v$ becomes full, we \textit{flush} it, moving the segments to buffers of the corresponding children. For a segment $\sigma$ that should be stored in child $v_i$, we repeat the same procedure as in the root: Check whether $\sigma$ has smaller $x$-value than the maximum $x$-value of a segment stored in $\cS_{v_i}$ and if yes, store $\sigma$ in $\cS_{v_i}$, otherwise store it in $I_{v_i}$. If $\cS_{v_i}$ overflows, we move its last segment (i.e. the one with maximum $x$-value) into $I_{v_i}$. Also, if $\sigma$ gets stored in $\cS_{v_i}$ and its $x$-value is smaller than all previous segments of $\cS_{v_i}$, we update the minimal segment of $v_i$, $\cM_{v}$.  
 
 When $\cS_v$ overflows for some leaf $v$, we split $v$ into two leaves $v_1$ and $v_2$, as in standard $B$-trees. Note that this might cause recursive splits of nodes at greater height.  

\vspace{0.15cm}
\noindent \textit{Bounding the insertion cost:} To flush a buffer $I_v$ and forward segments to buffers $\cS_{v_i}$ and $I_{v_i}$, for $1\leq i \leq \Be$ we perform $O(\Be)$ \ios. Since $I_v$ becomes full after at least $B$ insertions, the amortized cost of moving a segment from $I_v$ to buffers of a child of $v$ is $O(\Be/B) = O(1/B^{1-\epsilon})$. Each inserted segment moves between buffers in a root-to-leaf path of length $O((1/\epsilon) \log_B K)$, thus the total amortized cost for moves between buffers is $O(\log_B K/(\epsilon \cdot B^{1-\epsilon}))$ \ios. The restructuring of $\cT$ due to splitting nodes requires amortized $O(1/B)$ \ios, as in standard  B-trees. Thus, insertions are supported in $ O(\log_B K/(\epsilon \cdot B^{1-\epsilon})) $ amortized \ios.

\paragraph{Deletions.} To delete a segment $\sigma$, we first check whether it is stored in the buffers of the root $\cS_r$ or $I_r$; in this case we delete it. Otherwise, we insert $\sigma$ in the deletion buffer of the root $D_r$. 

Let $v$ be an internal node with children $v_1,\dotsc,v_{\Be}$. Whenever $D_v$ becomes full we flush it and move the segments to the corresponding children and repeat the same procedure: For a segment $\sigma$ which moves to child $v_i$, we check whether it is stored in $\cS_{v_i}$ or $I_{v_i}$: if yes, we delete it and update the minimal segment of $v_i$ in $\cM_v$ if necessary. Otherwise, we store $\sigma$ in the deletion buffer $D_{v_i}$. If segment buffer $\cS_{v}$ underflows (i.e., $| \cS_{v} | < B/2 $), we refill it using segments stored in buffers $\cS_{v_i}$; the segments moved to $\cS_{v}$ are deleted from $\cS_{v_i}$ and all necessary updates in $\cM_{v}$ are performed. This might cause underflowing segment buffers $\cS_{v_i}$ for children of $v_i$; we handle those in the same way. In case all buffers $\cS_{v_i}$ become empty and $| \cS_{v} | < B$ , we move the segments from $I_v$ to $\cS_v$ until either $| \cS_{v} | = B$ or $| I_v | = 0$.

\vspace{0.1cm}
\noindent \textit{Bounding the deletion cost:} Deletion cost consists of three components:

\begin{enumerate}[(i)]\itemsep.25em
\item Cost for moving segments between buffers: Using the same analysis as for insertions we get that this requires $O(\log_B K/(\epsilon \cdot B^{1-\epsilon})) $ amortized \ios.
\item Cost due to refilling of buffers $\cS_{v}$: For a node $v$ with children $v_i$, while refilling buffer $\cS_{v}$ from $\cS_{v_i}$ we perform $O(\Be)$ \ios\ and we move $\Theta(B)$ segments one level higher. Thus the amortized cost of moving a segment up by one level is $O(1/B^{1-\epsilon})$. Since the tree has height $O((1/\epsilon) \cdot \log_B K)$, over a sequence of $K$ deletions the total number of moves of segments by one level is $O((1/\epsilon) \cdot K \cdot \log_B K)$. Thus the total cost due to refilling is at most $O( (1/ \epsilon B^{1-\epsilon}) K \cdot \log_B K)$, which implies that the amortized cost is $O(\log_B K/(\epsilon \cdot B^{1-\epsilon}))$.

A corner case that we did not take into account above is when the total number of segments stored in buffers $\cS_{v_i}$ are less than $B/2$. In this case it is not valid that the amortized cost of updating $\cS_{v}$ is $O(\Be/B)$. To take care of this, we use a simple amortization trick: we double charge all \ios\ performed relating to insertions.  This way, for each buffer $\cS_{v_i}$ there is a saved \io\ from the time when segments move from $I_v$ to node $v_i$. We use this additional saved \io\ when $\cS_{v_i}$ gets emptied due to the refilling of $S_{v}$.

\item Restructuring requires $O(\frac{\log_B K}{B})$ amortized \ios, by rebuilding the structure after $K/2$ deletions.  
\end{enumerate}

Overall, the amortized deletion cost is $O(\log_B K/(\epsilon \cdot B^{1-\epsilon})) =O(\log_B K/B^{1-\epsilon})$ \ios. 



\paragraph{Queries in the dynamic structure.}
We now describe how to extend our query algorithm to the dynamic case.
In order to ensure that all nodes visited are up-to-date and we do not miss any updates in the insertion/deletion buffers, when moving a pointer from a node $u$ to its child $v_i$, we flush any deletes in $D_{u}$ to $v_i$, i.e. delete segments of $D_u$ that are stored in $\cS_{v_i}$, store the other segments in $D_{v_i}$ and update $\cM_u$ if necessary. We then delete any segments found in both $I_{v_i}$ and $D_{v_i}$. Finally, we compare segments in $I_{v_i}$ with $\sigma_+$ (recall this is the first segment hit by $\rho^{+}$ among segments considered so far) and, if any segment in $I_{v_i}$ would be hit by $\rho^{+}$ before $\sigma_+$ we replace $\sigma_+$ with it. Clearly this increases the total cost by at most a $O(1)$ factor compared to the static case, thus the query cost is $O((1/\epsilon)\log_B K)$ \ios.
\end{onlymain}

\begin{onlymain}
\section{Multislab Structure}
\label{sec:multislab}

In this section we prove Theorem~\ref{lem:multislab}. Assume that we are given a set of $K$ non-intersecting segments with endpoints on at most $\Bd+1$ vertical lines $l_1,\dotsc,l_{\Bd+1}$, for some constant $O < \epsilon \leq 1/2$. We show that those segments can be stored in a data structure which uses $O(K)$ space, supports vertical ray-shooting queries in $ O(\log_B K)$ \ios, and updates in $ O(\log_B K/B^{1-\epsilon})$ amortized \ios, for $0 < \epsilon \leq 1/2$. This data structure can be constructed in $O((K/B) \log_{B}K)$ \ios. We call this data structure a \textit{multislab} structure.

For notational convenience we set $\epsilon' = \epsilon/2$. This way endpoints of the segments lie on at most $\Bp +1$ vertical lines $l_1,\dotsc,l_{\Bp+1}$.  For $1 \leq i \leq \Bp$, let $s_i$ denote the vertical slab defined by vertical lines $l_i$ and $l_{i+1}$.
We will show that queries are supported in $O(\log_B K)$ \ios\ and updates in $O((\log_B K) / B^{1-2\epsilon '})$ \ios . Theorem~\ref{lem:multislab} then follows.

\paragraph{Total Order.} In order to implement the multislab structure we need to maintain an ordering of the segments based on their $y$-coordinates. Using standard approaches (see e.g.~\cite{ArgeV04,ArgeBR12}) we can define a partial order for segments that can be intersected by a vertical line. Arge et. al.~\cite{ArgeVV07} showed how to extend a partial order into a total order on $K$ segments (not necessarily all intersecting the same vertical line) in $O((K/B) \log_{M/B}\frac{K}{B}) = O((K/B) \log_{B}K)$ \ios. We use this total order to create our multislab structure.

\paragraph{The Data Structure.} 
We store the ordered segments in an augmented B-tree \cT which supports queries, insertions and deletions. The degree of each node is between $\Bp /2$ and $\Bp$, except the root which might have degree in the range $[2,\Bp]$. Leaves store $\Theta(B)$ elements. For a node $v \in T$, let $\cT_v$ be the subtree rooted at $v$. Let $v_1,\dotsc,v_{\Bp}$ be the children of an internal node $v$. Node $v$ stores the following information:

\begin{enumerate}\itemsep.2em
 \item A buffer $\cS_v$ of capacity $B$ which contains the highest (according to the total order) segments stored in $\cT_v$ which are not stored in any buffer $\cS_w$ for an ancestor $w$ of $v$. In other words, $\cT$ together with segments of buffers $\cS_v$ form an external memory priority search tree~\cite{ArgeSV99}. 
 \item An insertion buffer $I_v$ and a deletion buffer $D_v$, both storing up to $B$ segments.

 \item A list $L_v$ which contains, for each slab $s_i$, $1 \leq i\leq \Bp$, and each child $v_j$, $1 \leq j\leq \Bp$, the highest segment (according to the total order) $t_{i,j}$ crossing slab $s_i$ stored in $\cT_{v_j}$. 
\end{enumerate}

The data structure satisfies the following invariants: i) for each node $v \in \cT$, either $|\cS_{v}| \geq B/2$ or if $|\cS_{v}| < B/2$, then $I_v$ and $D_v$ are empty and all buffers of descendants $w$ of $v$ are empty, ii) for each node $v$, buffers $\cS_v, I_v$ and $D_v$ are disjoint, and iii) for every leaf $v$, $I_v$ and $D_v$ are empty. 

\paragraph{Construction and Space Usage.}  Overall buffers of each node contain $O(B)$ segments and list $L_v$ contains at most $B^{2\epsilon'} = O(B)$ segments, i.e., they can be stored in $O(1)$ blocks. Thus $\cT$ can be stored in $O(K/B)$ blocks, i.e. it requires $O(K)$ space. The structure can be constructed in $O(\frac{K}{B} \log_B K)$ \ios. If segments are already sorted according to a total order, construction requires $O(K/B)$ \ios.

\paragraph{Insertions.} To insert a new segment $\sigma$ we need to determine its position in the total order. Clearly, we can not afford to produce a new total order from scratch, as this costs $O((K/B) \log_{B}K)$ \ios. Thus, we break $\sigma$ into at most $\Bp$ \textit{unit segments}, where each segment crosses exactly one slab. In particular, if $\sigma$ crosses slabs $s_{\ell},\dotsc,s_r$, we break it into unit segments $\sigma_{\ell},\dotsc,\sigma_r$, where segment $\sigma_i$ crosses slab $s_i$. We call all such unit segments stored in $\cT$ \textit{new} segments. The rest of the segments stored in $\cT$ are called the \textit{old} segments of $\cT$. Now we can easily update the total order: segment $\sigma_i$ needs to be compared only with segments crossing slab $s_i$; if $\sigma_p$ and $\sigma_s$ are the predecessor and successor of $\sigma_i$ within slab $s_i$, we locate $\sigma_i$ in an arbitrary position between $\sigma_p$ and $\sigma_s$ in the total order. This way a valid total order is always maintained. 

We now describe the insertion algorithm. When segment $\sigma $ needs to be inserted, we first break it into unit segments $\sigma_{\ell},\dotsc,\sigma_r$. For each segment $\sigma_j$, $\ell \leq j \leq r$, we first check whether it should be inserted in the buffer $\cS_r$ of the root: if this is the case we store it there; otherwise we store it in the insertion buffer of the root $I_r$. In case $\cS_{r} $ overflows (i.e. $|\cS_{r}| = B+1$) we move its last segment (according to the total order) to $I_{r}$. Let $v$ be an internal node with children $v_1,\dotsc,v_{\Bp}$. Each time $I_v$ becomes full, we \textit{flush} it and move the segments to its children $v_i$, for $1 \leq i \leq \Bp$. For a segment moving from $v$ to $v_i$, we first check whether it is greater (according to the total order) than the minimum segment stored in $\cS_{v_i}$ and if so we store it in $\cS_{v_i}$; otherwise we store it in buffer $I_{v_i}$. In case $\cS_{v_i} $ overflows (i.e. $|\cS_{v_i}| = B+1$) we move its last segment to $I_{v_i}$. Also we update information in list $L_v$ if necessary. In case $I_{v_i}$ becomes full, we repeat the same procedure recursively.

When $\cS_v$ overflows for some leaf $v$, we split $v$ into two leaves $v_1$ and $v_2$, as in standard $B$-trees. Note that this might cause recursive splits of nodes at greater height.  

\vspace{0.15cm}

\noindent \textit{Bounding the insertion cost:} To flush a buffer $I_v$ and move segments to buffers of child nodes $\cS_{v_i}$ and $I_{v_i}$, we need to perform $O(\Bp)$ \ios. Since each segment breaks into at most $\Bp$ unit segments, a buffer of size $B$ becomes full after at least $B/ \Bp = B^{1-\epsilon'}$ insertions. Thus the amortized cost of moving a segment from a buffer of depth $i$ to depth $i+1$ is $O(\Bp/{B^{1-\epsilon'}}) = O(1/B^{1-2\epsilon'})$. Since each segment will be eventually stored in a node of depth $O((1/\epsilon') \cdot \log_B K)$, the amortized cost until it gets inserted is $O(\log_B K/(\epsilon ' \cdot B^{1-2\epsilon'}))$.  The restructuring of $\cT$ due to splitting full nodes requires amortized $O(1)$ \ios, as in standard B-trees. Overall insertions require $O(\log_B K/(\epsilon \cdot B^{1-2\epsilon'}) ) = O((\log_B K)/B^{1-\epsilon} ) $ amortized \ios.

\vspace{0.15cm}

\noindent \textit{Linear space usage:} To avoid increases in space usage due to unit segments, whenever there are $K/\Bp$ new segments, we rebuild the structure. This way the space used is $O(K + (K/\Bp) \cdot \Bp) = O(K)$. This rebuilding requires $O((K/B)\log_B K)$ \ios, i.e., $ O(\log_B K/B^{1-\epsilon'})$ amortized \ios, thus it does not violate the insertion time bound. 

\paragraph{Deletions.} 
The process of deleting a segment, $\sigma$, is similar to insertion: we break $\sigma$ into at most $\Bp$ unit segments $\sigma_{\ell},\dotsc,\sigma_r$ where $s_\ell$ and $s_r$ are the leftmost and rightmost slabs spanned by $\sigma$ and apply the deletion procedure for each of those unit segments separately.

The deletion algorithm for a unit segment $\sigma_i$ is analogous to the one of the left (right) structure of Section~\ref{sec:left_right}. For completeness we describe it here. To delete a unit segment $\sigma_i$, we first check whether it is stored in the buffers of the root $\cS_r$ or $I_r$; in this case we delete it. Otherwise, we insert $\sigma_i$ in the deletion buffer of the root $D_r$. Let $v$ be an internal node with children $v_1,\dotsc,v_{\Bp}$. Whenever $D_v$ becomes full we flush it and forward the segments to the corresponding children and repeat the same procedure: For a segment $\sigma$ which moves to child $v_i$, we check whether it is stored in $\cS_{v_i}$ or $I_{v_i}$ and if this is the case, we delete it and update list $L_v$ if necessary. Otherwise, we store $\sigma_i$ in the deletion buffer $D_{v_i}$.  

In case segment buffer $\cS_{v}$ underflows (i.e., $| \cS_{v} | < B/2 $), we refill it using segments from buffers $\cS_{v_i}$; segments moved to $\cS_{v}$ are deleted from $\cS_{v_i}$ and $L_{v}$ gets updated (if needed). This might cause underflowing segment buffers $\cS_{v_i}$; we handle those in the same way. In case all buffers $\cS_{v_i}$ become empty and $| \cS_{v} | < B$ , we move to $\cS_{v}$ the segments from $I_v$ until either $| \cS_{v} | = B$ or $| I_v | = 0$. After $K/\Bp$ deletions we rebuild our data structure.

\vspace{0.15cm}
\textit{Remark:} Note that here we split all segments $\sigma$ into unit segments $\sigma_{\ell},\dotsc,\sigma_r$. However, the old segments $\sigma$ are not unit segments and are stored manually in the data structure. However this does not affect our algorithm: whenever the first unit segment $\sigma_i$ which is a part of $\sigma$ reaches the node $v$ such that $\sigma \in \cS_{v}$, we delete $\sigma$ from $\cS_v$ and remove $\sigma_i$ from deletion buffers. The remaining segments $\sigma_j$ will eventually reach node $v$ and realize that $\sigma$ is already deleted from $\cS_v$; at this point $\sigma_j$ gets deleted.

\vspace{0.15cm}
\noindent \textit{Bounding the deletion cost:} The analysis of the deletion cost is identical to the analysis of deletions in the structure of Section~\ref{sec:left_right}. Since each segment breaks into at most $\Bp$ unit segments, we get an amortized deletion cost of $O(\log_B K/ B^{1-2\epsilon'}) = O(\log_B K/ B^{1-\epsilon})$.

\vspace{0.15cm}

\noindent \textit{Linear space usage:} Similar to insertions, we need to make sure that the total space used is not increasing asymptotically due to the use of at most $\Bp$ unit segments in deletion buffers for each deleted segment $\sigma$. The total capacity of deletion buffers is $O(K)$. Since we rebuild the structure after $K/\Bp$ deletions, there are at most $O(K)$ segments stored in deletion buffers, i.e., deletion buffers never get totally full and total space used is $O(K)$

\begin{figure}[t]
\centering
\includegraphics[scale= 0.65]{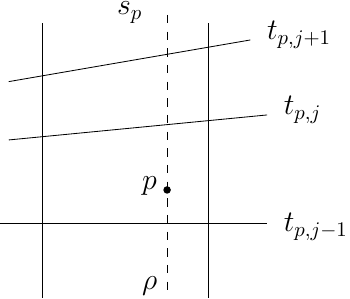}
\caption{Vertical ray-shooting queries in the multislab structure: Query point $p$ is in slab $s_p$. $\rho$ is the vertical ray emanating from $p$. While being at node $v$ of $\cT$, to decide in which child to continue our search we examine all minimal segments $t_{p,1}, \dotsc,t_{p,\Bp}$ stored in list $L_v$. Among them, the first one hit by $\rho$ is $t_{p,j}$. Thus the search continues at child $v_j$ of $v$.}
\label{fig:multislab_query}
\end{figure}

\paragraph{Queries.} 
Let $p$ be the query point and $\rho^{+}$ be the the vertical ray emanating from $p$ in the $(+y)$ direction. Let also $s_p$ be the slab containing $p$. We can find $s_p$ in $O(1)$ \ios\ by storing all slab boundaries in a block. We perform a root-to-leaf search and we keep the first segment $\sigma$ hit by $\rho^{+}$ among segments seen so far. While visiting a node $v$ we do the following: (i) perform a vertical ray-shooting query from $p$ among segments stored in buffers $\cS_{v}$ and $I_{v}$, and update $\sigma$ if necessary (ii) move to the child $v_i$ which contains the successor segment $t_{p,j}$ of $p$ in list $L_v$ (see Figure~\ref{fig:multislab_query}) and (iii) find in $I_v$ (resp. $D_v$) the segments crossing slab $s_p$ and should be stored (according to the total order) in $\cT_{v_i}$ and move them to $\cS_{v_i}$ or $I_{v_i}$ (resp. delete them from $\cS_{v_i}$ or store it in $D_{v_i}$). If a segment inserted in $D_{v_i}$ is also stored in $I_{v_i}$, we delete it from both buffers.
Once we reach a leaf $v$, we first delete from $\cS_v$ the segments that are in the deletion buffer of its parent and then we perform ray-shooting query among the segments stored in $\cS_v$ and update $\sigma$ if necessary.

\textit{Bounding the query cost:} Since we follow a root-to-leaf path, and at each level we need to perform $O(1)$ \ios, a ray-shooting query is answered in $O((1/\epsilon') \cdot \log_B K) $ \ios.


\end{onlymain}

\begin{onlymain}
\section{Counting the Restructuring Cost}
\label{sec:restucture}

In Section~\ref{sec:overview} we proved the Theorem~\ref{thm:main} (query and update bounds of the overall structure) without taking into account the cost of restructuring the interval tree $\cI$ due to insertions that cause leaves to become full. In this section we show that Theorem~\ref{thm:main} holds while taking into account the restructuring of $\cI$ as well.

When a leaf becomes full we need to split it. This split in turn might cause the split of the parent and possibly continue up the tree, thus causing some part of the tree $\cI$ to need rebalancing. While rebalancing, we need to perform updates in the secondary structures so that they are adjusted with the updated nodes of the interval tree $\cI$. In this section, we show that we can slightly modify our data structure such that all updates in secondary structures can be performed in $O(\frac{\log_B N}{B^{1-\epsilon}})$ amortized \ios. This implies that Theorem~\ref{thm:main} holds. 

\paragraph{Our Approach.} We use a variant of the weight-balanced \Be-tree of~\cite{ArgeV03}. Each leaf stores at most $B$ segment endpoints. Let $v$ be a node at height $h-1$ with parent $p(v)$. Node $p(v)$ stores $w_v = \Theta(B \cdot B^{\epsilon h})$ elements in its subtree $\cI_{p(v)}$. We will show that if node $v$ splits, then we can perform all updates needed in the secondary structures in $O(w_v/B^{1-\epsilon})$ \ios. This implies that a split requires amortized $O(1/B^{1-\epsilon})$ \ios, since after a restructuring, there should be at least $\Omega(w_v)$ insertions in $\cI_{p(v)}$ until the next split is needed. Since each insertion can cause $O(\log_B N)$ splits, we get an amortized restructuring cost of $O(\frac{\log_B N}{B^{1-\epsilon}})$ \ios\ for insertion.

\paragraph{Splitting a node.} Node $v$ splits into two new nodes $v_1$ and $v_2$. The slab $s_v$ of $v$ is divided into two slabs $s_{v_1},s_{v_2}$ with slab boundary $b$; see Figure~\ref{fig:slabs-split}. To capture this change and update our data structure, we need to perform updates in the secondary structures of $p(v)$ and construct the secondary structures for $v_1,v_2$. We describe these updates in detail and show that they can be performed in $O(w_v/B^{1-\epsilon})$ \ios. In our analysis we use the fact that all secondary structures (multislab and left/right) storing $K$ segments can be scanned in $O(K/B) $ \ios.

\begin{figure}[t]
\centering
\includegraphics[scale= 0.6]{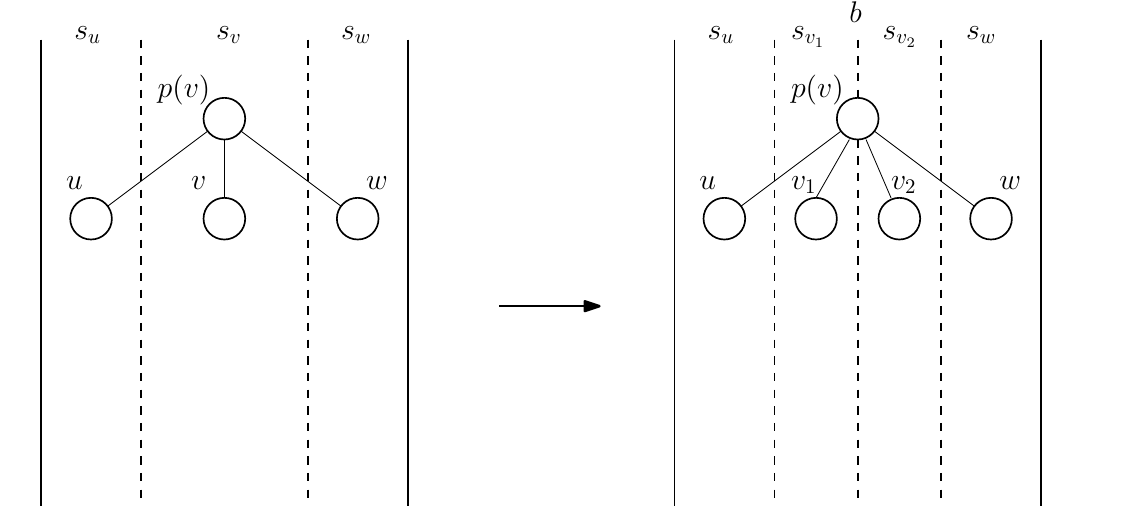}
\caption{Splitting a node $v$ into $v_1$ and $v_2$: slab $s_v$ is divided into slabs $s_{v_1}$ and $s_{v_2}$ with boundary $b$.}
\label{fig:slabs-split}
\end{figure}

 \paragraph{Updates in secondary structures of $p(v)$.}
 We begin with the construction of left/right structures for $v_1$ and $v_2$ using the previous left/right structures for $v$.  We describe the creation of left structures $L_{v_1}$ and $L_{v_2}$ for $v_1$ and $v_2$, respectively, and the right structures are symmetric. Segments that were stored in $L_v$ and do not cross $b$ (like segment $\sigma_1$ in Figure~\ref{fig:segments-examples}) are stored in $L_{v_2}$; segments of $L_v$ that cross $b$ (see segment $\sigma_2$ in Figure~\ref{fig:segments-examples}) are stored in $L_{v_1}$. To identify if a segment is stored in $L_{v_1}$ or $L_{v_2}$ we just need to scan $L_v$, which takes $O(w_v/B)$ \ios. Moreover, there are some additional segments that need to be stored in left/right structures of $p(v)$: the segments that are strictly inside the slab of $v$ (i.e. they were stored in secondary structures of $v$) and cross $b$; see e.g. segment $\sigma_3$ in Figure~\ref{fig:segments-examples}. For those segments, their left subsegments are stored in $L_{v_1}$ and their right subsegments in $R_{v_2}$. To find such segments we need to scan all secondary structures stored at $v$. Since each secondary structure can be scanned in $O(w_v/B)$ \ios\ and there are $O(\Be)$ structures stored in each node, all this takes $O((w_v/B) \cdot \Be) = O(w_v/B^{1-\epsilon})$ \ios.

 We now proceed to the updates of the multislab structure of $p(v)$. Here, we just need to add some segments to the previous multislab structure. The new segments are the segments of $L_v$ that cross $b$ which are not already stored in the multislab (and symmetrically, the segments of $R_v$ that cross $b$ and are not yet in the multislab). For an example, see segment $\sigma_2$ in Figure~\ref{fig:segments-examples}; before it was not stored in the multislab and now we store its middle subsegment. Note that the middle subsegment is a unit segment (i.e. crosses exactly one slab) thus we don't need to compute a new total order; we can find its position in the total order by comparing it only with segments that cross slab $s_{v_2}$. All those segments that need to be added can be found by scanning $L_v$ and $R_v$ in $O(w_v/B)$ \ios. Insertions in the multislab of $p(v)$ require $O((\log_B w_v)/B^{1-\epsilon}) = O(w_v/B)$ \ios. Also, all information stored in nodes of the multislab structure can be updated in $O(w_v/B)$ \ios. Overall, all updates in the  multislab structure of $v$ are performed in $O(w_v/B)$ \ios.

\begin{figure}[t]
\centering
\includegraphics[scale= 0.6]{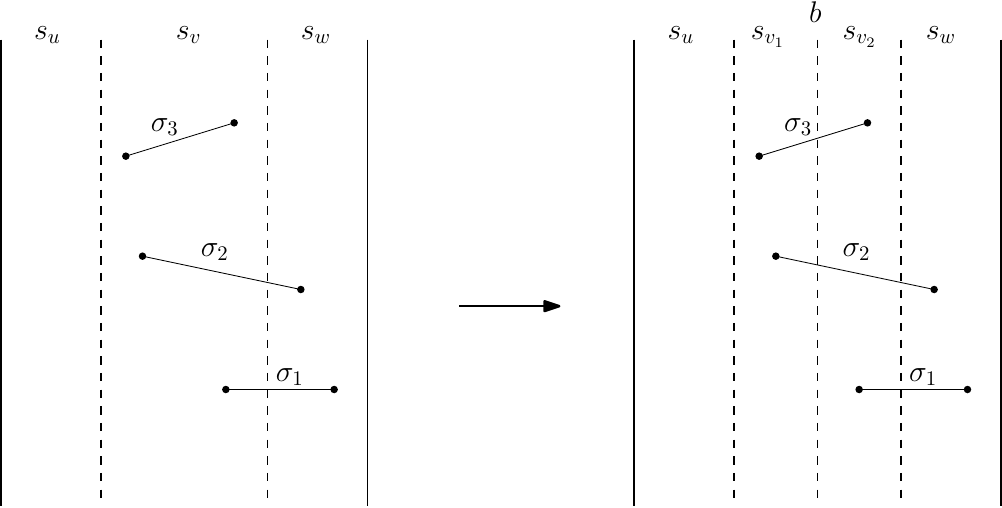}
\caption{Example of segments that get stored in different secondary structures after a split.  Segment $\sigma_1$ was stored in $L_v$ and, after the split, gets stored in $L_{v_2}$. Segment $\sigma_2$ was stored in $L_v$; following the split its left subsegment is stored in $L_{v_1}$ and its middle subsegment in the multislab structure of $p(v)$. Segment $\sigma_3$ was previously stored in secondary structures of $v$, and after the split it should be stored in structures $L_{v_1}$ and $R_{v_2}$ of $p(v)$. }
\label{fig:segments-examples}
\end{figure}

 \paragraph{Construct secondary structures for $v_1$ and $v_2$.} 
 The left and right structures for each child slab of $v_1$ and $v_2$ will be based on the left/right structure of the same slab in $v$ just by removing the segments that cross $b$ (which are assigned to $p(v)$ as we explained above).  Similarly, segments that cross $b$ are excluded from the multislab structure. 
 
  We start with the construction of left/right structures of $v_1$ and $v_2$. We describe the left and the right is symmetric. For each slab $s_k$ of $v$, $1\leq k \leq \Be$ we scan the left list $L_k$; the segments that do not cross $b$ remain in $L_k$ and the others are deleted. All this takes $O((w_v/B) \cdot \Be) = O(w_v/B^{1-\epsilon}) $ \ios.

 Finally we create the multislab structures for $v_1$ and $v_2$. Again, we need to scan the multislab of $v$ and delete the segments that cross $b$, which takes $O(w_v/B)$ \ios. Then we need to build the multislabs of $v_1$ and $v_2$ out of the remaining segments. Since all segments are already sorted according to a total order, this can be done in $O(w_v/B) $ \ios.

\end{onlymain}
\section{Concluding Remarks}
\label{sec:openprobs}

\begin{onlymain}
We presented the first data structure with sublogarithmic update time for dynamic planar point location in the DAM, matching the update bound achieved by $\Be$-trees for the dictionary problem. Moreover, until the very recent work of Munro and Nerich~\cite{MN19} in SOCG'19, our query bound $O(\log^2_B N)$ was the best known for the problem.  Since in~\cite{MN19} authors achieved the first $o(\log^2_B N)$ query bound, a very interesting research direction is to achieve the ``best of both worlds", i.e. describing a data structure with the query bound of~\cite{MN19} and the update time of the data structure presented in this work. 
We conjecture that the optimal bounds for dynamic planar point location in external memory are $O(\log_B N)$ for queries and $O(\log_B N/B^{1-\epsilon})$ (the bound we achieved in this work) for updates. 

\end{onlymain}


\bibliographystyle{plain}
{\small \bibliography{point_location_external} }

\newpage
\excludecomment{onlymain}
\includecomment{onlyapp}
\appendix

\end{document}